\def\BibTeX{{\rm B\kern-.05em{\sc i\kern-.025em b}\kern-.08em
    T\kern-.1667em\lower.7ex\hbox{E}\kern-.125emX}}
\long\def\comment#1{}
\newcolumntype{P}[1]{>{\centering\arraybackslash}p{#1}}
\newcolumntype{M}[1]{>{\centering\arraybackslash}m{#1}}
\newcommand{\multilines}[1]{%
	\begin{tabularx}{\dimexpr\linewidth-\ALG@thistlm}[t]{@{}X@{}}
		#1
	\end{tabularx}
}
	\DeclarePairedDelimiterX{\norm}[1]{\lVert}{\rVert}{#1}
	\DeclarePairedDelimiterX{\abs}[1]{\lvert}{\rvert}{#1}
	\DeclarePairedDelimiterX{\innProd}[1]{\langle}{\rangle}{#1}
	\newcommand{\defeq}{\vcentcolon=}
	\theoremstyle{plain}
	\newtheorem{theorem}{Theorem}
	\newtheorem{assumption}{Assumption}
	\newtheorem{lemma}{Lemma}
	\theoremstyle{definition}
\newcommand{\OurAlg}{\ensuremath{\textsc{GoodSpeed}}\xspace}
\begin{document}
\bstctlcite{IEEEexample:BSTcontrol}

\title{\OurAlg: Optimizing Fair Goodput with Adaptive Speculative Decoding in Distributed Edge Inference\\}

\author{\IEEEauthorblockN{Phuong Tran\textsuperscript{1},
 Tzu-Hao Liu\textsuperscript{1},  Long Tan Le\textsuperscript{1}, Tung-Anh Nguyen\textsuperscript{1}, Van Quan La\textsuperscript{1}, Eason Yu\textsuperscript{1}, Han Shu\textsuperscript{1} \\
 Choong Seon Hong\textsuperscript{2}, Nguyen H. Tran\textsuperscript{1}}
\IEEEauthorblockA{\textsuperscript{1}School of Computer Science,
The University of Sydney, Darlington, NSW 2006, Australia}
\IEEEauthorblockA{\textsuperscript{2}Department of Computer Science and Engineering, Kyung Hee University, Yongin-si, Gyeonggi-do 17104, Korea}}

\maketitle

\begin{abstract}
Large language models (LLMs) have revolutionized natural language processing, yet their high computational demands pose significant challenges for real-time inference, especially in multi-user and resource-constrained environments. Speculative decoding has emerged as a promising technique to accelerate LLM inference by using lightweight draft models to generate candidate tokens, which are subsequently verified by a larger, more accurate model. However, ensuring both high goodput—the effective rate of accepted tokens—and fairness across multiple draft servers cooperating with a central verification server remains an open challenge.
This paper introduces \OurAlg, a novel distributed inference framework that optimizes goodput through adaptive speculative decoding. \OurAlg employs a central verification server that coordinates a set of heterogeneous draft servers, each running a small language model to generate speculative tokens. To manage resource allocation effectively, \OurAlg incorporates a gradient scheduling algorithm that dynamically assigns token verification tasks, maximizing a logarithmic utility function to ensure proportional fairness across servers.
By processing speculative outputs from all draft servers in parallel, the framework enables efficient collaboration between the verification server and distributed draft generators, streamlining both latency and throughput. Through rigorous fluid sample path analysis, we show that \OurAlg converges to the optimal goodput allocation in steady-state conditions and maintains near-optimal performance with provably bounded error under dynamic workloads. These results demonstrate that \OurAlg provides a scalable, fair, and efficient solution for multi-server speculative decoding in distributed LLM inference systems.
\end{abstract}

\begin{IEEEkeywords}
Speculative Decoding, Gradient Scheduling Algorithm, Resource Allocation, Large Language Models, Distributed Edge Inference
\end{IEEEkeywords}
\section{Introduction}
The emergence of agentic AI systems, autonomous agents powered by large language models (LLMs), e.g., Llama3-70B, capable of performing complex reasoning, planning, and communication, has introduced transformative capabilities for applications at the network edge. These agents facilitate tasks such as protocol adaptation, semantic data processing, and distributed decision-making. However, deploying such functions on resource-constrained edge hardware remains challenging, as multi-step inference and contextual reasoning typically exceed the computational and memory limitations of most edge devices. Small language models (SLMs), e.g., Qwen3-0.6B, while suitable for local deployment, often fall short in delivering the required accuracy for complex tasks. As a result, existing architectures rely heavily on offloading to either high-capacity servers or centralized cloud-based LLM services (e.g., OpenAI’s ChatGPT, Anthropic’s Claude, Microsoft’s AutoGen), where edge devices forward input data for remote inference.

The reliance on communication between edge devices and centralized cloud systems introduces several critical challenges. First, the performance of large language models (LLMs) is hindered by inference latency, which arises from both round-trip communication delays and the computationally intensive nature of LLM inference. These factors collectively increase response times and energy consumption, posing limitations for latency-sensitive interactive applications. Second, as the number of edge devices scales, centralized servers face growing computational and bandwidth demands, leading to scalability bottlenecks and elevated operational costs. These constraints complicate the deployment of LLMs that require low latency and high throughput responses. 

Recent developments in LLM serving systems offer potential solutions. DistServe improves resource efficiency by separating prefill and decoding phases across GPUs \cite{distserve2024}, while Sarathi-Serve addresses throughput-latency tradeoffs through chunked-prefills and stall-free scheduling \cite{sarathiserve2024}. The notion of smooth goodput has also been introduced as a unified metric to balance service-level objectives (SLOs) and performance \cite{smoothgoodput2024}. Among these, speculative decoding emerges as a promising approach, using smaller draft models to generate candidate token sequences verified in parallel by a more capable model, thus reducing latency. However, implementing speculative decoding in distributed edge systems is underexplored. 

To address the aforementioned challenges of distributed LLM inference, we propose \OurAlg, a framework for optimizing \underline{Good}put with Adaptive \underline{Spe}culative Decoding for Distributed \underline{Ed}ge Inference. \OurAlg equips edge servers with lightweight small language models (SLMs) that locally generate speculative tokens, which are subsequently verified by a central server running a full-scale large language model (LLM). Designed for deployment in resource-constrained and latency-sensitive environments, \OurAlg operates as a coordinated serving framework in which the central server dynamically manages verification and feedback across a heterogeneous pool of draft servers. This architecture enables scalable inference by parallelizing speculative token generation at the edge and leveraging centralized verification to minimize latency, while jointly optimizing goodput and fairness across the distributed system.
The key contributions are summarized as follows.
\begin{itemize}
    \item We propose \OurAlg, a distributed edge inference system that leverages speculative decoding to reduce latency. In \OurAlg, multiple draft servers use lightweight SLMs to autoregressively generate speculative token sequences. These sequences are then sent to a central verification server equipped with a large target model for validation. While each draft server generates tokens sequentially, \OurAlg enables parallel generation across these servers. Meanwhile, the verification server efficiently verifies batches of speculative tokens in parallel using GPU acceleration, significantly reducing the overall response time.
    \item  For efficiency and fairness, \OurAlg employs a gradient-based scheduling algorithm that dynamically determines the optimal number of speculative tokens to be generated by each draft server. The key innovation of this scheduling approach lies in its use of estimated token acceptance rates for the SLM at each draft server, which serve as critical inputs for estimating the expected goodput required by the scheduling process.
    \item We provide a rigorous theoretical analysis of \OurAlg. Using fluid model and sample path techniques, we prove that the sequence of token allocations generated by \OurAlg converges to a state that maximizes the system's goodput. 
    \item We conduct extensive experiments using state-of-the-art LLMs, including Qwen3-14B and Llama-3-70B, to validate our framework's performance. The results demonstrate that our dynamic scheduling algorithm achieves substantial gains in system-wide goodput and reduces end-to-end latency compared to baseline methods, confirming its  effectiveness.
\end{itemize}

\section{Background and Related Works}

\subsection{Speculative Decoding in LLM Inference}

\subsubsection{LLM Inference} The standard method for LLM inference is autoregressive decoding, where tokens are generated sequentially, with each new token conditioned on the full sequence of its predecessors~\cite{autoregressive_decoding}. Mathematically, given a sequence of tokens $s_1, s_2, \ldots, s_{j-1}$, an LLM model $M$ computes the conditional probability distribution for the next token $s_j$ as:
$$P_M(s_j \mid s_1, s_2, \ldots, s_{j-1}) = \text{softmax}(W \cdot h_{j-1})$$
where $h_{j-1}$ is the hidden state of the Transformer-based model $M$ at step $j-1$, and $W$ is the model's weight matrix mapping the hidden state to the vocabulary logits. The model $M$ then selects the next token (e.g., via greedy sampling, $s_j = \arg\max_{s} P_M(s \mid s_1, \ldots, s_{j-1})$) and appends it to the sequence.

In practice, this process involves two distinct computational phases: (1) a compute-bound \textit{prefill phase}, where the client's prompt is processed in a single, parallel forward pass across all tokens, dominated by large matrix multiplications and attention operations; and (2) a memory-bandwidth-bound \textit{decoding phase}, which generates the response one token at a time by repeatedly accessing a cached set of intermediate states. The sequential nature of the decoding phase represents a significant latency bottleneck in client-server applications \cite{LLMbottleneck}.

\subsubsection{Speculative Decoding}
Traditional decoding approaches, such as greedy or beam search, which generate tokens sequentially, requiring each one to be produced and verified before advancing to the next. This sequential dependency imposes significant latency, particularly in real-time or interactive scenarios. Speculative decoding (\textbf{SD}) mitigates this bottleneck by performing fast drafting and verification in parallel. It works as an efficient method for faster decoding without compromising quality. A lightweight but weaker draft model $M_q$ autoregressively generates $S$ candidate tokens by its sampling distribution $q_j(s) \defeq P_{M_q}(\cdot \mid, s_{1}, \dots, s_{j-1})$ 
$$s_{j} \sim q_j(s), \quad j=1,\dots,S.$$
A larger target model $M_p$ then computes the probabilities of candidate tokens using its sampling distribution $p_j(s) \defeq P_{M_p}(\cdot \mid s_{1}, \dots, s_{j-1}), j = 1, \ldots, S, $ \textit{in parallel}. 
For each candidate $s_{j}$, draw a uniform random variable $r_j \sim U(0,1), j=1, \ldots, S$. The token $s_{j}$ is accepted if $r_j \leq  \frac{p_j(s_j)}{q_j(s_j)}$, ensuring the draft model's token is retained only if its probability aligns sufficiently with the target model's distribution; otherwise, it is rejected. Then, the number of accepted tokens is defined $m \defeq  \min ( \{j -1 \mid 1\leq j \leq S,  r_j > p_j(s_j)/q_j(s_j)\} \cup \{S\} )$. If $m < S$,  we sample the last token $s'$
using the normalized distribution of $\max(0, p_{m+1}(s) - q_{m+1}(s))$; otherwise, we sample $s'$ from $p_{S+1}(s)$. The accepted token output using SD will be $s_{\text{prefix}}, s_{1}, \ldots, s_m, s'$. 

From \cite{Yaniv}, the token acceptance probability is $$\alpha \defeq \mathbb{E}_{s \sim q(\cdot)} \left[ \min\left(1, {p(s)}/{q(s)}\right) \right].$$ 
Since the large model verifies candidate tokens in parallel, verification typically completes within the time of a single token generation, enabling a 2–3× speedup on T5-XXL compared with autoregressively decoding \cite{Yaniv}. The effectiveness of SD depends on the acceptance rate of drafted tokens, which is influenced by factors such as draft model quality and behavioral alignment between the draft and verification models \cite{xia2024unlockingefficiencylargelanguage}. A higher acceptance rate leads to greater speedup.

Recent advances in SD focus on improving the quality of speculative drafts, enabling self-speculative generation, and optimizing system-level efficiency. To enhance the quality and diversity of speculative drafts, SpecInfer \cite{SpecInfer} use multiple small models to generate diverse token trees at the cost of higher management overhead. The authors in \cite{LookAhead} reformulate decoding as a Jacobi iteration to enable parallel $n$-gram generation, but struggle with error propagation, where a single incorrect token can invalidate an entire draft. Other recent proposals focus on further improving the efficiency and verification process of these internal drafts~\cite{draft_verify, xia2025swift}. However, these self-speculation approaches require careful tuning of their architectural parameters to balance performance and accuracy.  

Despite these advances, existing SD approaches assume idealized execution environments, often ignoring constraints such as inter-model communication costs or hardware locality. Our work addresses this gap by \emph{modelling an SD-based multi-agent networked system, where draft and target models are distributed across physical nodes and interact under various input load datasets}.

\subsection{Distributed Edge Inference}

\subsubsection{Server-Side Inference Optimization}

A substantial body of work has addressed the server-side bottlenecks in LLM serving. Continuous batching has emerged as a central mechanism for improving throughput by processing inference requests as they arrive, rather than relying on static batching. This approach was introduced in Orca~\cite{Orca} and further advanced in vLLM~\cite{pagedattention}, which proposed PagedAttention to reduce memory fragmentation and improve memory reuse. Subsequent frameworks such as Sarathi-Serve~\cite{sarathiserve2024} and DeepSpeed-FastGen~\cite{DeepSpeed} incorporate techniques like dynamic input/output chunking to further increase system efficiency.
 
Systems including ShuffleInfer~\cite{TetriInfer}, Splitwise~\cite{SplitWise}, and DistServe~\cite{distserve2024} implement phase-specific scheduling to enhance parallelism and adapt to network conditions. At scale, model-parallel architectures such as HeteGen~\cite{HeteGen}, ExeGPT~\cite{ExeGPT}, and Helix~\cite{Helix} distribute model computation across GPUs or nodes to increase aggregate throughput, though often with a trade-off in per-request latency.

\subsubsection{Client-Side and On-Device Optimization}

Improving client-side performance is essential for edge deployments. Lightweight draft models can be made more efficient through post-training quantization and distillation techniques. Approaches such as GPTQ~\cite{gptq} and AWQ~\cite{awq} significantly reduce memory and compute requirements, making them well-suited for mobile or embedded devices.
Emerging work also explores self-speculative architectures that eliminate the need for server-side verification. Medusa~\cite{medusa} incorporates parallel decoding heads to perform both drafting and verification internally, while EAGLE~\cite{eagle} applies early-exit strategies to reduce inference latency without external assistance.

\subsection{Motivation}
Works like Ring Attention \cite{RingAttention} and TetriInfer \cite{TetriInfer} optimize system-level metrics (memory, latency) but do not ensure equitable goodput distribution across clients. Solutions such as Prompt Cache \cite{promptcache} and ExeGPT \cite{ExeGPT} assume static or predictable inputs, ignoring dynamic prompt variations. Methods like Helix \cite{Helix} and Mélange \cite{Melange} use intricate partitioning or allocation strategies, increasing computational overhead. Cloud deployments \cite{ServerlessLLM} focus on cost efficiency, often leaving server resources underutilized.  Moreover, traditional SD systems require a uniform draft length across batched requests, leading to inefficient decoding and limited scalability when handling prefixes with varying acceptance rates. 

To address these gaps, we introduce \OurAlg, a novel framework whose key contribution is a new edge-based distributed architecture for SD. This architecture leverages lightweight draft models on heterogeneous edge devices for local token generation, coupled with a central server for efficient verification and adaptive feedback, overcoming the latency and scalability limitations of centralized SD systems. \OurAlg further enhances this design with gradient scheduling to ensure fair resource allocation that facilitates efficient computation, and provides asymptotic optimality via fluid sample path analysis, validated by finite-interval experiments. 
\section{Problem Formulation}
\subsection{\OurAlg System Architecture}
\begin{figure*}[t]
    \centering
    \hypertarget{fig:system}{}
    \includegraphics[width=0.95\linewidth]{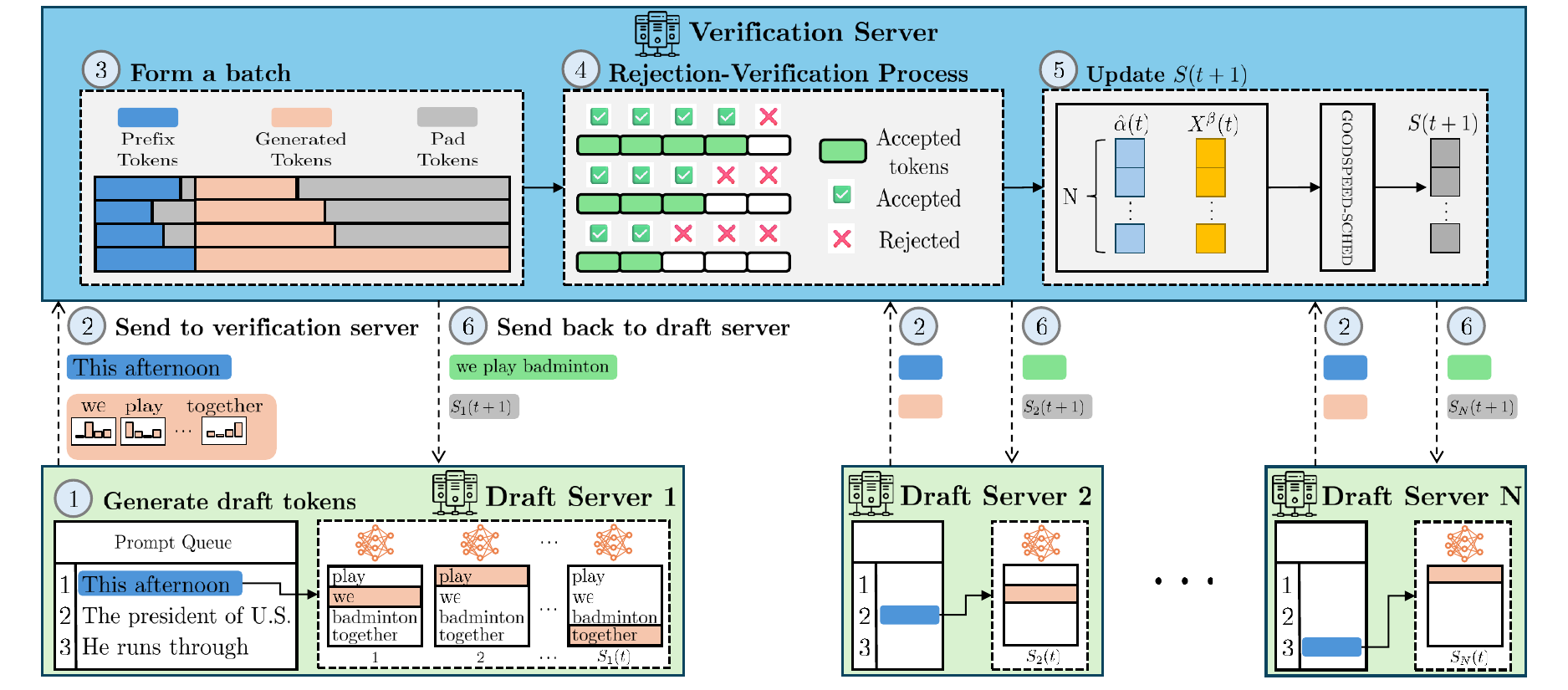}
    \caption{An overview of \OurAlg framework: \textcircled{{1}} Draft servers prepare draft tokens for current prompts; \textcircled{{2}} Draft servers send their draft tokens to the verification server; \textcircled{{3}} Draft sequences are batched; \textcircled{{4}} Rejection-verification process and \textcircled{{5}} \OurAlg's gradient scheduling for the allocation in the next round $t+1$; and \textcircled{{6}} The verification server sends optimal allocation to the draft servers.}
    \label{fig:system architecture}
\end{figure*}

Recent studies have extended SD to distributed settings. A common paradigm is edge–server collaboration, where lightweight draft models run on edge devices and a central server verifies proposed tokens. This structure supports heterogeneous devices while enabling the server to batch verification requests across clients, thereby improving utilization~\cite{li_sled_2025, spin}. To reduce communication overhead, network-efficient inference protocols have been proposed. For instance, DSSD~\cite{dssd2025} introduces a split verification mechanism that transmits only minimal accept/reject signals and required resample tokens, reducing bandwidth usage while preserving output fidelity.


In this work, we consider a distributed edge inference system comprising $ N $ edge-deployed draft servers, each equipped with a lightweight draft model $ M_q^{(i)}, i = 1, \ldots, N $. These servers interact with a single verification server hosting a target (large) model $ M_p $ for verification, as illustrated in Figure~\ref{fig:system architecture}. The draft servers receive prompts from their respective end-users and engage in diverse real-time applications, including natural language translation, chatbot services, and code generation. This distributed edge inference setup allows for low-latency drafting close to end-users, minimizing data transfer overheads and enabling scalable real-time processing across geographically dispersed clients. The verification server, endowed with high-performance GPUs, serves as the central computational resource, validating the speculative tokens drafted by the small servers to ensure accuracy and efficiency. This validation employs a \emph{draft-then-verify} process inspired by SD. Lightweight models on small servers first generate candidate token sequences, proposing potential continuations (steps \hyperlink{fig:system}{\textcircled{1}} and \hyperlink{fig:system}{\textcircled{2}}). The verification server then verifies these drafts in parallel, accepting accurate prefixes that align with the target model’s distribution and rejecting mismatches. Corrections are sampled from an adjusted distribution, enabling faster inference while maintaining fidelity to the original output (steps \hyperlink{fig:system}{\textcircled{3}}, \hyperlink{fig:system}{\textcircled{4}} and \hyperlink{fig:system}{\textcircled{5}}). Each small server maintains a \emph{prefix} (e.g., a prompt) for its ongoing inference task, which is updated in every round by appending the accepted tokens and the correction token (if any) returned from the verification server after verification. This prefix serves as the conditioning context for generating the next set of draft tokens, ensuring that the inference process remains coherent and aligned with the target model's distribution across multiple rounds. The system operates in discrete time slots $ t = 0, 1, 2, \ldots $, and the interaction between the small servers and the verification server follows a structured flow: small servers submit draft tokens to the verification server, which manages a FIFO queue to process requests in the order of arrival, performs verification and rejection, and provides feedback to the small servers with updated allocations and accepted tokens (step \hyperlink{fig:system}{\textcircled{6}}). 
This distributed architecture underpins \OurAlg, designed to optimize inference performance across draft servers considering the verifying server's limited resources, particularly under non-stationary prompt conditions prevalent in dynamic operational environments.

\subsection{\OurAlg Optimization Problem}
\label{sec:primal}

\OurAlg is formulated by the following utility maximization problem:
\begin{align}
\label{primal}
\max_x \quad U(x) \defeq \sum\nolimits_{i=1}^N U_i(x_i) \quad \text{subject to} \quad x \in \mathcal{X}, 
\end{align}
where \(x_i\) denotes the goodput (i.e., expected number of accepted tokens per unit time) of draft server \(i\), \(x = (x_1, x_2, \dots, x_N) \in \mathbb{R}_{+}^N\) is the long-term average goodput vector, and \(\mathcal{X}\) denotes the achievable goodput region. This metric is computed as the limit of the empirical average $\bar{x}_i(T) = \frac{1}{T} \sum_{t=1}^T x_i(t)$ as the number of time steps $T$ approaches infinity, where $x_i(t)$ represents the realized goodput (accepted tokens) at time step $t$. The long-term average goodput captures the steady-state throughput achieved by each client under the framework's scheduling policy, accounting for variations in acceptance rates and resource allocation, and serves as the key performance metric optimized by the utility function $U(x)$ over the achievable region $\mathcal{X}$. In our paper, the utility function \(U_i: \mathbb{R}_{+} \to \mathbb{R}\) is continuously differentiable, strictly increasing, and strictly concave in each component (e.g., \(U_i(x_i) = \log x_i\) for proportional fairness \cite{Stolyar}).

To define the achievable goodput region \(\mathcal{X}\), consider a scheduling decision \(k = (S_1, S_2, \dots, S_N) \in \mathbb{Z}_{+}^N\), where \(S_i\) is the proposed length of draft tokens allocated to client \(i\). The expected goodput vector associated with \(k\) is \(\mu(k) = (\mu_1(k), \mu_2(k), \dots, \mu_N(k)) \in \mathbb{R}_{+}^N\)

\[
\mu_i(k) = \frac{1 - \alpha_i^{S_i + 1}}{1 - \alpha_i},
\]
representing the expected number of tokens generated for client \(i\) under speculative decoding \cite{Yaniv}, assuming a fixed acceptance rate \(0 < \alpha_i < 1\). This is the expected value of a geometric random variable with success probability \(1 - \alpha_i\), capped at \(S_i + 1\). We assume non-degeneracy bounds: there exist constants \(0 < \underline{\mu} \leq \bar{\mu} < \infty\) such that \(\underline{\mu} \leq \mu_i(k) \leq \bar{\mu}\) for all \(i = 1, \dots, N\) and all feasible \(k\), ensuring \(\mathcal{X}\) is bounded and non-degenerate.

The feasible set of \OurAlg's scheduling decisions is
\[
\mathcal{K} = \{ k = (S_1, S_2, \dots, S_N) \in \mathbb{Z}_{+}^N : \sum\nolimits_{i=1}^N S_i \leq C \},
\]
where the constant \( C \) denotes the hardware budget of the verification server, representing the ideal number of tokens per forward pass to fully utilize both compute and memory bandwidth on modern GPUs \cite{li_adaserve_2025}. Sample $C$ values of H100 GPUs are given in Table~\ref{tab:skew_result}. It is selected through systematic hardware profiling to balance memory usage, latency, and throughput, while avoiding out-of-memory (OOM) errors at the verification server. 
Let \(\phi(k) \geq 0\) denote the scheduling probability of choosing decision \(k \in \mathcal{K}\), with \(\sum_{k \in \mathcal{K}} \phi(k) = 1\). 
The goodput region \(\mathcal{X}\) is the convex hull of all achievable long-term average goodput vectors:
\begin{align}
\mathcal{X} =  \{ \bar{x} \mid \bar{x} =  \sum\nolimits_{k \in \mathcal{K} } \phi(k) \mu(k),
\sum\nolimits_{k \in \mathcal{K}} \phi(k) = 1 \}. 
\end{align}

The optimization problem \eqref{primal} admits a unique optimal solution \(x^* \in \mathcal{X}\) because \(U_i(x)\) is strictly concave and \(\mathcal{X}\) is convex and compact (guaranteeing that the maximum is attained).
However, the dynamic evolution of client prompts, which may transition abruptly between domains (e.g., casual dialogue to technical queries), introduces variability in the acceptance rate \( \alpha_i(t) \), the probability that a drafted token is accepted. Consequently, the problem \eqref{primal} involves dynamically allocating \( S_i(t) \) to optimize \( x_i(t) \) while adhering to the verification  server’s capacity \( C \). Therefore, directly solving this requires global knowledge of $ U(x) $, which depends on all possible $ \alpha_i(t) $ and $ C $, making it computationally infeasible in real-time, especially with non-stationary prompts.

\subsection{\OurAlg's Gradient Scheduling Algorithm}

The optimization problem \eqref{primal} defines a static benchmark for the optimal long-term goodput allocation $x^*$, under the assumption of fixed acceptance rates and averaged system behavior. However, distributed edge inference operates in a dynamic environment where acceptance rates $\alpha_i(t)$ fluctuate due to non-stationary prompts and task variations, and system capacity may vary over time. To adapt to these dynamics, \OurAlg employs a gradient-based scheduling algorithm that performs online allocation at each time step t. This algorithm adjusts the number of proposed tokens for each draft server for the next time slot $t+1$, using real-time system estimates. By doing so, it ensures adaptive resource allocation and asymptotically converges to the optimal allocation $x^*$, maintaining both efficiency and fairness in dynamic conditions.

Central to this is the use of smoothed estimates for both acceptance rates and goodput, which provide stable representations of each client's performance while incorporating recent observations. As described in \ref{algo:goodspeed}, at each time step \(t\), the framework maintains a vector of smoothed acceptance rate estimates \(\hat{\alpha}(t) = (\hat{\alpha}_1(t), \hat{\alpha}_2(t), \dots, \hat{\alpha}_N(t)) \in (0,1)^N\), where \(\hat{\alpha}_i(t)\) updates the probability that a drafted token of server $i$ is accepted by the target model, after the verification process (line~\ref{algo:update}). These estimates are updated using exponential smoothing by the verification server in parallel as follows:
\begin{align}
\label{updatealpha}
\hat{\alpha}_i(t) = (1 - \eta) \hat{\alpha}_i(t-1) + \eta \cdot \frac{1}{S_i(t)} \sum_{j=1}^{S_i(t)} \min\left(1, \frac{p_j(s_{j})}{q_{i,j}(s_{j})}\right)
\end{align}
where the sum computes the empirical acceptance rate from the verification outcomes at time \(t\), with $p_j$ and $q_{i,j}$ being the target and draft server's model $i$ probabilities on token $s_j$, and $\eta \in (0,1)$ is a fixed smoothing parameter that controls the adaptation to recent observations. We proposed this smoothing estimate $\hat{\alpha}_i(t)$ to mitigate short-term noise from prompt variations, ensuring that \(\hat{\alpha}(t)\) converges to a stable value reflective of the system's ergodic behaviour under some assumptions (outlined in section \ref{sec:con_analysis}).

Building on the acceptance estimates, the framework also maintains a vector of smoothed goodput estimates \(X^\beta(t) = (X^\beta_1(t), X^\beta_2(t), \dots, X^\beta_N(t)) \in \mathbb{R}_{+}^N\), where \(X^\beta_i(t)\) approximates the expected number of accepted tokens per unit time for client \(i\) based on historical data (line~\ref{alg:update_X}). These estimates are updated using exponential smoothing by the verification server:
\begin{align}
\label{updateX}
X^\beta_i(t) = (1 - \beta) X^\beta_i(t-1) + \beta \, x_i(t),
\end{align}
where \(x_i(t)\) is the realized goodput for client \(i\) at time \(t\) (the number of accepted tokens plus one correction from verification \cite{leviathan2023fast}), and $\beta \in (0,1)$ is a fixed smoothing parameter. The smoothing for \(X(t)\) further stabilizes the system against fluctuations. Building on these estimates, the gradient scheduling problem optimizes the allocation of draft token slots \(S(t+1) = (S_1(t+1), S_2(t+1), \dots, S_N(t+1)) \in \mathbb{Z}_{+}^N\) at each time step \(t\) by maximizing the projection of the estimated goodput onto the gradient of the utility function (line~\ref{alg:goodspeed_sched}). This approach ensures that each allocation step favors draft servers with the highest marginal utility gain, \emph{promoting both efficiency and fairness while adapting to the current state} encoded in \(X(t)\) \cite{Stolyar}.
Mathematically, the gradient scheduling problem solved by the verification server at each time $t$ is formulated as follows.

\protect\hypertarget{goodspeedsched}{\OurAlg-\textsc{sched}:}
\begin{equation}
\label{eq:goodspeed_sched}
\begin{aligned}
\quad \max_{S(t+1)} \quad & \sum\nolimits_{i=1}^{N} \nabla U_i\big(X^\beta_i(t)\big) \, \hat{x}_i(t+1) \\
\quad \text{s.t.} \quad 
& \sum\nolimits_{i=1}^{N} S_i(t+1) \leq C, \quad S_i(t+1) \in \mathbb{Z}_{+}, \forall i.
\end{aligned}
\end{equation}

Here, $\hat{x}_i(t+1)$ is the estimated goodput of the draft server $i$ at timestep $t+1$, computed by the following: 
\[
\hat{x}_i(t+1) = \frac{1 - \hat{\alpha}_i(t)^{S_i(t+1) + 1}}{1 - \hat{\alpha}_i(t)}, 
\]
where $\hat{\alpha}_i(t)$ is the estimated acceptance rate obtained by \eqref{updatealpha}.






\begin{algorithm}[t]
    \caption{\OurAlg Scheduling Algorithm}
    \label{algo:goodspeed}
    \begin{algorithmic}[1]
    \STATE \textbf{Initialize:} $\beta, \eta, S_i(0), X_i(0), \hat{\alpha}_i(0), $ $\forall i = 1, \dots, N$
    \FOR{each timestep $t$}
    \FOR{each \emph{draft server}, $i = 1, \dots, N$ parallelly}
        \STATE Sample $S_i(t)$ draft tokens from $M_q^{(i)}$ (\hyperlink{fig:system}{step \textcircled{1}}):
        \FOR{$j=1$ to $S_i(t)$}
            \STATE $q_{i,j}(s) \leftarrow P_{M_q^{(i)}}(\cdot  \mid {\text{prefix}_i},s_{i,1},...,s_{i,j-1})$
            \STATE $s_{i,j}(t) \sim q_{i,j}(s)$ 
        \ENDFOR
        \STATE $s_i(t) \leftarrow (\text{prefix}_i(t), s_{i,1}(t),...s_{i,S_i(t)}(t))$
        \STATE Send $s_i(t)$, $q_{i,j}$ to the verification server (\hyperlink{fig:system}{step \textcircled{2}})
    \ENDFOR
    \STATE \emph{Verification server:} 
    \STATE Does Batching (\hyperlink{fig:system}{step \textcircled{3}}) and Verification (\hyperlink{fig:system}{step \textcircled{4}})
    \STATE \label{algo:update} Computes $x_i(t)$, updates $\hat{\alpha}_i(t)$ using \eqref{updatealpha}, and updates  $X_i(t)$ using \eqref{updateX}. \label{alg:update_X}
    \STATE Solves the \hyperlink{goodspeedsched}{\OurAlg-\textsc{sched}}  problem \eqref{eq:goodspeed_sched} to obtain $S(t+1)$ (\hyperlink{fig:system}{step \textcircled{5}}) \label{alg:goodspeed_sched}
    \STATE Sends $S(t+1)$ to draft servers (\hyperlink{fig:system}{step \textcircled{6}}). 
    \ENDFOR
    \end{algorithmic}
\end{algorithm}



\subsection{Convergence Analysis}
\label{sec:con_analysis}
The purpose of this section is to establish the asymptotic convergence of the \OurAlg's gradient scheduling to the optimal goodput  \(x^*\) of problem (\ref{primal}). Given the non-stationary nature of acceptance rates \(\hat{\alpha}_i(t)\) due to varying prompts, we adapt fluid sample path techniques to analyze the long-term behavior of the smoothed estimates \(X(t)\) and \(\hat{\alpha}(t)\). Under suitable assumptions, we show that the framework achieves optimality in the fluid limit as the smoothing parameter \(\beta \to 0\).


\begin{assumption} [\textbf{Ergodicity with Converging Time Average}]
\label{assumption:1}
For each client \(i \in \{1, 2, \dots, N\}\), the sequence of acceptance rates \(\{{\hat{\alpha}}_i(t), t = 0, 1, 2, \dots\}\) is ergodic, such that the time average
\begin{equation}
\bar{\alpha}_i(T) = \frac{1}{T} \sum_{t=0}^{T-1} \hat{\alpha}_i(t)  \nonumber
\end{equation}
converges almost surely to a limit \(\bar{\alpha}_i \in [0, 1]\) as \(T \to \infty\).

This assumption is realistic in distributed edge-server setups, and the ergodicity requirement is met if prompt variations (e.g., topic shifts in real-time translation) are sufficiently mixing, which is typical in conversational AI where user inputs follow predictable patterns over time. The smoothing parameter \( \eta \) can be dynamically adjusted based on observed variance in $\hat{\alpha}_i(t)$ (e.g., reducing \( \eta \) if variance exceeds a threshold). The convergence of \( \bar{\alpha}_i(T) \) is empirically achievable within $700$ rounds, as detailed in Section~\ref{exp:conv_util_func} and Figure~\ref{fig:utility_conv}. 
\end{assumption}

\begin{assumption}[\textbf{Uniform Boundedness and Lipschitz Continuity}]
\label{assumption:2}
For all \(t \geq 0\) and each client \(i \in \{1, 2, \dots, N\}\), the acceptance rate \(\hat{\alpha}_i(t)\) is uniformly bounded by a constant $\hat{\alpha}_{\max}$, i.e., \(\hat{\alpha}_i(t) \leq \hat{\alpha}_{\max} < 1, \forall i, t\). Additionally, \(\hat{\alpha}_i(t)\) is Lipschitz continuous with a constant \(L > 0\), such that
\begin{equation}
|\hat{\alpha}_i(t+1) - \hat{\alpha}_i(t)| \leq L \cdot \eta \nonumber
\end{equation}
where \(L\) depends on the maximum variation in the empirical acceptance indicators \(\min\left(1, \frac{p_j(s_j)}{q_{i,j}(s_j)}\right)\), and \(\eta \in (0,1)\) is the smoothing parameter for \(\hat{\alpha}_i(t)\).

The Lipschitz constant  $L$ captures the maximum rate of change, influenced by the variability of $\min\left(1, \frac{p_j}{q_{i,j}} \right)$. With $\eta = O(1/t^{\hat{\alpha}})$  ($\hat{\alpha} > 0.5$), the step size shrinks, ensuring smooth transitions. Stochastic approximation theory \cite{Approximation} guarantees Lipschitz continuity for bounded noise terms, with $L \leq 1$ derived as the maximum gradient, making this assumption theoretically plausible.
\end{assumption}

\begin{assumption} [\textbf{Sufficient Smoothing and Step Size}]
\label{assumption:3}
The smoothing parameter for acceptance rates is chosen as  $\eta = O(1/t^{\hat{\alpha}})$ with $\hat{\alpha} \in (0.5, 1]$, 
and the step size for goodput estimates satisfies \(\beta = O(1/t^\beta)\) with \(\beta \in (0.5, 1]\), ensuring that the ratio \(\eta / \beta \to 0\) as \(t \to \infty\).
\end{assumption}


\begin{theorem}[\textbf{Uniform Convergence in Probability}]
\label{theorem:main}
Let $\{X^\beta(t)\}_{t \geq 0}$ be the scaled stochastic process governed by the proposed algorithm with step-size parameter $\beta > 0$. Suppose the initial condition $X^\beta(0)$ lies within a bounded set $A \subset \mathbb{R}_+^N$. Under Assumptions~\ref{assumption:1}–\ref{assumption:3}, for any $\epsilon > 0$, there exists a time threshold $T > 0$ such that
$$
\lim_{\beta \to 0} \sup_{X^\beta(0) \in A, \, t > T/\beta} \Pr\left( \|X^\beta(t) - x^*\| > \epsilon \right) = 0,
$$
where $x^* \in \mathbb{R}_+^N$ is the unique globally attractive fixed point of the fluid dynamics.

\end{theorem}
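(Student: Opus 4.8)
The plan is to follow the fluid-limit / greedy-primal-dual machinery of Stolyar \cite{Stolyar}, adapted to the two-time-scale structure created by the separate smoothing parameters $\eta$ and $\beta$. First I would embed the discrete iterates into continuous time: define the linearly interpolated, fluid-scaled trajectory $X^\beta(\tau)$ obtained by running $\lfloor \tau/\beta\rfloor$ steps of \eqref{updateX}, and scale $\hat{\alpha}$ analogously. From \eqref{updateX} we have $X^\beta_i((t+1)\beta)-X^\beta_i(t\beta)=\beta\,(x_i(t)-X^\beta_i(t\beta))$, which on the fluid time scale is an Euler step for the ODE $\dot x_i=\bar\mu_i(x)-x_i$, where $\bar\mu(x)$ is the expected realized goodput under the scheduling decision $k(x)\in\arg\max_{k\in\mathcal{K}}\sum_i \nabla U_i(x_i)\,\mu_i(k)$. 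Using the non-degeneracy bounds $\underline{\mu}\le\mu_i(k)\le\bar\mu$, the per-step increments of $X^\beta$ are uniformly bounded, so the family $\{X^\beta(\cdot)\}$ is Lipschitz-equicontinuous and hence relatively compact; every subsequential limit is an absolutely continuous fluid sample path.

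Second, I would pin down the fluid dynamics. Assumption~\ref{assumption:3} forces $\eta/\beta\to 0$, so on the $\beta$-accelerated time scale the acceptance-rate recursion \eqref{updatealpha} is the \emph{fast} component: by Assumptions~\ref{assumption:1}–\ref{assumption:2} (ergodicity with converging time average, plus the Lipschitz bound $|\hat{\alpha}_i(t+1)-\hat{\alpha}_i(t)|\le L\eta$), $\hat{\alpha}_i$ is quasi-static along the fluid limit and equals its ergodic average $\bar\alpha_i$, so the estimate $\hat{x}_i(t+1)$ appearing in \eqref{eq:goodspeed_sched} coincides in the limit with $\mu_i(k)$ evaluated at $\bar\alpha_i$. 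A standard averaging / stochastic-approximation argument \cite{Approximation} then identifies any fluid sample path as a solution of the differential inclusion $\dot x(\tau)\in\overline{\mathrm{co}}\{\mu(k)-x(\tau):k\in\arg\max_{k}\langle\nabla U(x(\tau)),\mu(k)\rangle\}$, the convex hull absorbing ties in the integer $\arg\max$.

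Third, I would prove global asymptotic stability of $x^*$ for this inclusion using $U$ itself as a Lyapunov function. Since the scheduler maximizes $\langle\nabla U(x),\mu(k)\rangle$ and $x^*\in\mathcal{X}$ is a convex combination of points $\mu(k)$, every selection $\bar\mu(x)$ from the inclusion satisfies $\langle\nabla U(x),\bar\mu(x)\rangle\ge\langle\nabla U(x),x^*\rangle$; hence $\tfrac{d}{d\tau}U(x(\tau))=\langle\nabla U(x),\bar\mu(x)-x\rangle\ge\langle\nabla U(x),x^*-x\rangle\ge U(x^*)-U(x(\tau))$, the last inequality by concavity of $U$. Grönwall gives $U(x^*)-U(x(\tau))\le e^{-\tau}\,(U(x^*)-U(x(0)))$, and strict concavity of $U$ on the compact convex set $\mathcal{X}$ makes $x^*$ the unique maximizer, forcing $x(\tau)\to x^*$ at a rate uniform over bounded initial data: for every $\epsilon>0$ there is $T=T(\epsilon,A)$ with $\|x(\tau)-x^*\|<\epsilon$ for all $\tau\ge T$ and all admissible $x(0)\in A$.

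Finally, transfer the uniform fluid attraction back to the prelimit. On any finite fluid horizon $[0,T']$ one has $\sup_{\tau\le T'}\|X^\beta(\tau)-x(\tau)\|\to 0$ in probability as $\beta\to 0$, uniformly over $X^\beta(0)\in A$ (relative compactness plus identification of limits, with a union bound over a finite cover of $A$); combining this with $\|x(\tau)-x^*\|<\epsilon/2$ for $\tau>T$ yields $\lim_{\beta\to0}\sup_{X^\beta(0)\in A,\;t>T/\beta}\Pr(\|X^\beta(t)-x^*\|>\epsilon)=0$. I expect the main obstacle to be the two-time-scale averaging coupled with the \emph{discontinuous, integer-valued} $\arg\max$ scheduler: one must rule out the fluid path getting stuck on the lower-dimensional set where the maximizing $k$ is non-unique, which is exactly where the differential-inclusion formulation and the strict concavity of $U$ (guaranteeing that $x^*$ is the unique Lyapunov-optimal point even though the $\mu(k)$ themselves form a discrete set) are essential; a secondary technical point is the concentration estimate ensuring the fast empirical acceptance averages in \eqref{updatealpha} settle near $\bar\alpha_i$ on a time scale short compared with $1/\beta$.
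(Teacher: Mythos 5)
Your overall route --- fluid scaling of \eqref{updateX}, relative compactness via the non-degeneracy bounds, a two-time-scale averaging step using $\eta/\beta \to 0$ to freeze $\hat{\alpha}$ at its ergodic average, identification of the limit as a differential inclusion $\dot{x} = v - x$ with $v$ chosen by the gradient-scheduling $\arg\max$, a Lyapunov argument with $U$ itself, and a finite-horizon transfer back to the prelimit --- is essentially the same architecture as the paper's (Theorems~\ref{theorem:1}--\ref{theorem:3} and Lemmas~\ref{lemma:1}--\ref{lemma:2}), and in several places you supply more detail than the paper's own sketch does.

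There is, however, one concrete gap in your stability step. Your Gr\"{o}nwall bound $U(x^*) - U(x(\tau)) \le e^{-\tau}\bigl(U(x^*) - U(x(0))\bigr)$ is vacuous, and the claimed uniformity of $T(\epsilon, A)$ fails, whenever $U(x(0))$ is not bounded below uniformly over $A$. With $U_i(x_i) = \log x_i$ this happens precisely when the bounded set $A \subset \mathbb{R}_+^N$ contains or approaches points with some coordinate equal to zero, which the theorem's hypotheses permit. The paper closes this hole with an explicit boundary-drift condition (the second part of Lemma~\ref{lemma:2}): if $x_i(t) = 0$ for $i$ in some set $B$, the exploding gradient $1/x_i$ forces the scheduler to allocate to $B$, giving $\frac{d_+}{dt}\sum_{i \in B} x_i(t) \ge \underline{\mu} > 0$, so every fluid sample path enters a region $\{x_i \ge \delta\ \forall i\}$ within a finite time $T_1$ uniform over $A$; only then is the Lyapunov/Gr\"{o}nwall argument applied. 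You should add this preliminary boundary-repulsion phase before invoking your decay estimate; once $x(T_1)$ is uniformly bounded away from the boundary, $U(x(T_1))$ is uniformly bounded below and the rest of your argument goes through as written.
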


\begin{proof}
The proof of this theorem leverages the following foundational theorems, which are formally established in Appendix~\ref{ap:proof}. First, Theorem~\ref{theorem:1} shows that any weak limit of the scaled process $X^\beta(t)$ is almost surely a Fluid Sample Path (FSP). Then, Theorem~\ref{theorem:2} ensures that all FSPs starting from bounded initial conditions converge uniformly to the globally attractive fixed point $x^*$, due to a Lyapunov argument and boundary drift. Finally, Theorem~\ref{theorem:3} guarantees that the time-averaged performance also converges in expectation to $x^*$. Together, these imply that the stochastic process concentrates near $x^*$ with high probability for sufficiently small $\beta$ and large enough time, yielding the stated uniform convergence in probability.
\end{proof}

\section{Performance Evaluation}
\subsection{Experimental Setting}

\subsubsection{Testbed} To evaluate \OurAlg, we conducted experiments in a distributed inference environment designed to reflect practical deployment scenarios, where a large verification server collaborates with multiple lightweight draft servers to handle heterogeneous workloads. The verification server, equipped with an NVIDIA H100 GPU, hosted two target models—Qwen3-14B and Llama3.1-70B-Instruct-AWQ-INT4—representing distinct verification configurations. Meanwhile, each draft server, running on an NVIDIA L4 GPU, performed initial generation using lightweight models from the Llama3 and Qwen3 families, with sizes ranging from 0.6B to 3B parameters. This heterogeneous setup enabled us to evaluate the framework's ability to optimize resource usage across diverse model sizes and inference demands. In each time slot, the verification server processed batched draft outputs collected from the draft servers. Table~\ref{tab:exp_configs} summarizes the experimental configurations, including model choices, hardware specifications, the number of draft servers (clients), and token budget constraints.

\subsubsection{Datasets}
The experiments leveraged eight public datasets to simulate real-world applications. \textit{Alpaca} and \textit{Awesome-ChatGPT-Prompts} focus on instruction tuning and conversational Q/A; \textit{CNN/DailyMail} targets long-context summarization; \textit{OpenOrca} and \textit{Chatbot Arena} involve reasoning and open-domain Q/A; \textit{GSM8K} addresses mathematical problem solving; \textit{SPIDER} covers text-to-SQL generation; and \textit{HLE} includes high-difficulty, long-tail queries. These datasets were distributed across four to eight draft servers to create a mix of short, interactive prompts and longer, compute-intensive tasks. Maximum token lengths were set to 50 or 150 tokens depending on the setup, reflecting typical constraints in latency-sensitive environments. To increase task diversity among clients, each draft server was assigned a distinct dataset from the set above.

\subsubsection{Hyperparameters Choice} A critical experimental parameter in our system is the capacity $C$. We select $C$ based on two primary factors: (1) \textit{H100 HBM3 memory usage}: Considering the memory footprint of the verification server models (\textit{Qwen3-14B}, \textit{Llama3.1-70B-Instruct-AWQ-INT4}), including model weights, intermediate buffers, and output logits, and assuming the maximum prefix length observed in our datasets, we constrain memory usage to remain below 75\% of the HBM3's total 80GB capacity to prevent fragmentation and ensure stability. (2) \textit{Latency tolerance}: A larger $C$ increases the speculative draft length, but also leads to longer transmission time between verification and draft server due to the need to carry the full probability distributions for more tokens. This can delay response time in latency-sensitive applications.

To balance available memory headroom and latency constraints, we set the SD budget to $C=\{6, 20\}$ for 150-token generations and $C=\{24, 28\}$ for 50-token generations, which helps achieve a stable trade-off between efficiency and system responsiveness.

\begin{table*}
  \centering
  \caption{Experimental configurations}
  \label{tab:skew_result}
  \resizebox{0.95\linewidth}{!}{%
    \begin{tabular}{lcccccc}
      \toprule
       \textbf{Verification Model} & \textbf{Draft Model} & \textbf{Verification Server GPU} & \textbf{Draft Server GPU} & \textbf{$C$ (tokens)} & \textbf{Number of Draft Servers} & \textbf{Max Token Length} \\
      \midrule
      Qwen3-14B     & Qwen3-0.6B  & H100 & L4 &24, 28 & 4 & 50 \\
      \midrule
      Qwen3-14B     & Qwen3-0.6B/1.7B  & H100 & L4 &16, 20 & 8 & 150 \\
      \midrule
      Llama-3.1-70B-Instruct    & Llama 3.2-1B-Instruct/  & \multirow{2}{*}{H100} & \multirow{2}{*}{L4}& \multirow{2}{*}{16, 20} & \multirow{2}{*}{8} & \multirow{2}{*}{150} \\
      -AWQ-INT4 & Llama 3.2-3B-Instruct\\
      \bottomrule
    \end{tabular}%
  }
\label{tab:exp_configs}
\end{table*}
\subsection{Main Results}
\subsubsection{Goodput Estimation} We evaluate the effectiveness of our smoothed goodput estimation by comparing it with the actual system-level goodput over time. Figure~\ref{fig:goodput_estimation} shows results for two representative scenarios—Qwen3 and Llama3—each with 8 clients operating under dynamic and heterogeneous prompt conditions.
To reduce transient fluctuations, we apply a moving average (MA) filter with a window size of 10 to both the estimated and measured goodput curves. The plots demonstrate a strong alignment between our estimated and the ground-truth goodput, indicating that the proposed exponential smoothing method captures system dynamics despite the stochastic nature of SD and clients' prompt variability.  We visualize the empirical standard deviation (square root of MA variance) as shaded confidence bands around both curves. 

These regions encompass most observed goodput peaks, further confirming the stability and predictive fidelity of our estimation process. This supports our design choice to use smoothed estimates in the gradient scheduling algorithm, allowing for robust and adaptive token allocation across non-stationary environments and diverse LLMs. 

\begin{figure*}[t]
    \centering
    \begin{subfigure}[b]{0.48\textwidth}
        \centering
        \includegraphics[width=\textwidth]{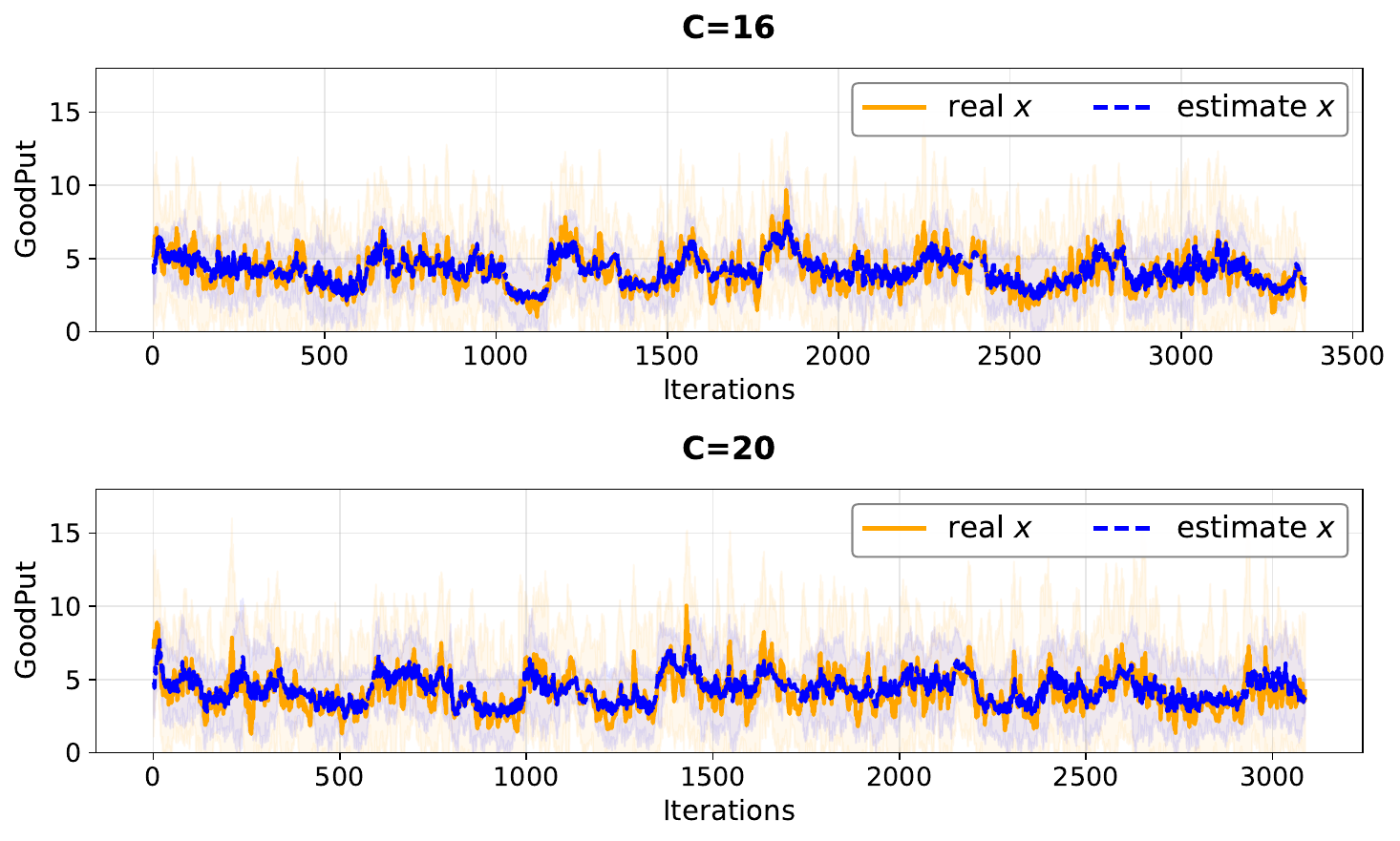}
        \caption{Qwen3 – 8 clients}
        \label{fig:goodput-qwen3}
    \end{subfigure}
    \begin{subfigure}[b]{0.48\textwidth}
        \centering
        \includegraphics[width=\textwidth]{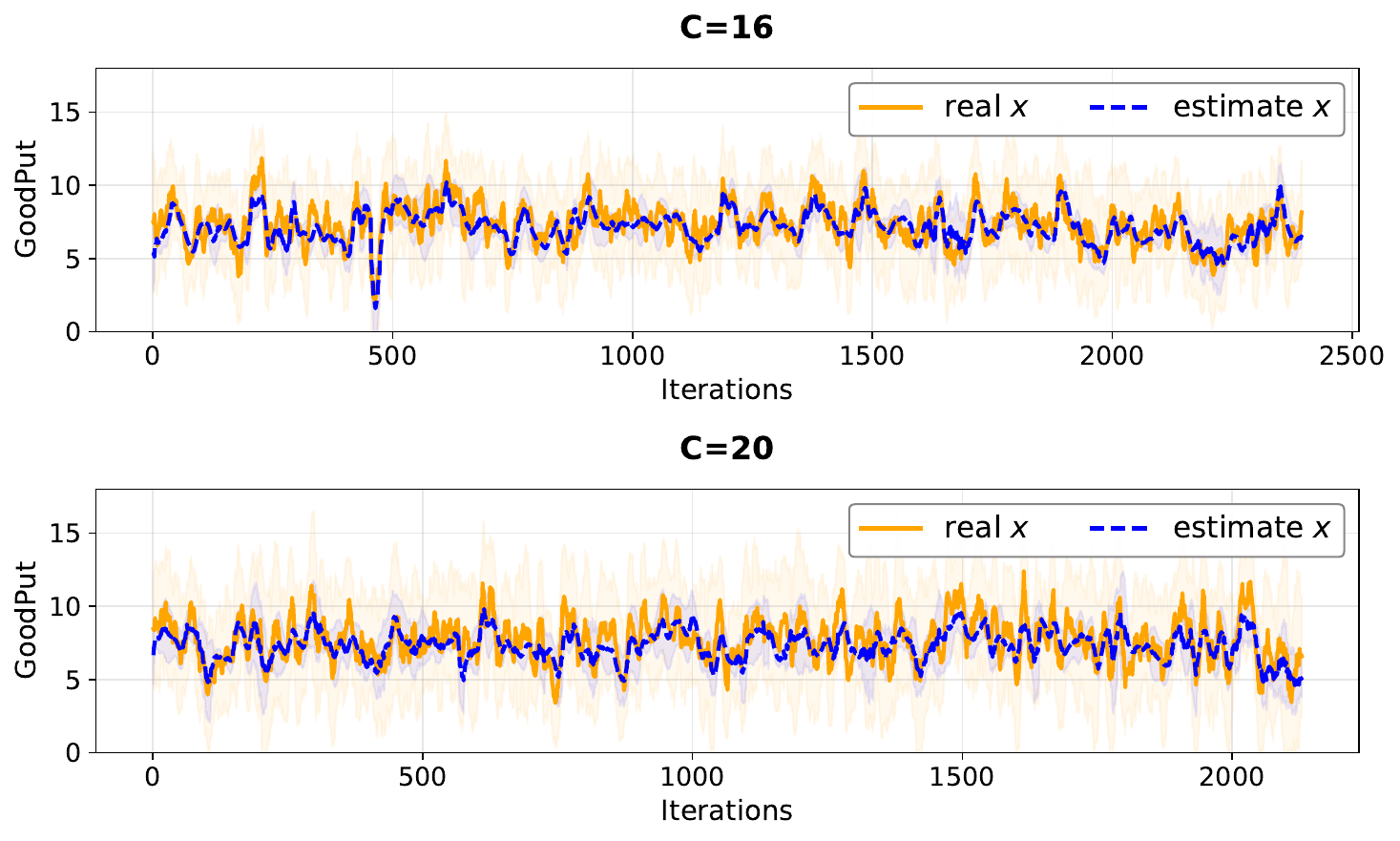}
        \caption{Llama3 – 8 clients}
        \label{fig:goodput-llama3}
    \end{subfigure}
    \caption{Comparison of estimated goodput vs real goodput across different LLMs with 8 clients.}
    \label{fig:goodput_estimation}
\end{figure*}
\subsubsection{Time Distribution} We evaluate the end-to-end wall time of the \textsc{\OurAlg} framework and compare it against two common SD baselines: (1) \textit{Fixed-S}: each draft server consistently generates a fixed number of tokens per iteration, with $S_i = C / N$; and (2) \textit{Random-S}: each draft server randomly samples $S_i$ per iteration, constrained such that the total across clients does not exceed $C$. In Fig.~\ref{fig:time-distribution}, we see that \OurAlg achieves comparable total wall time to the Fixed-S baseline across both Qwen3 and Llama3 settings. In contrast, Random-S exhibits a 5--25\% increase in wall time due to scheduling inefficiencies. To better understand this behavior, we decompose wall time into three components: (1) \textit{Receiving time}, the duration the verification server waits for draft servers to complete token generation and transmit their draft distributions, until the batch is fully assembled; (2) \textit{Verification time}, the time taken by the verification process; 

and (3) \textit{Sending time}, the duration for communicating the accepted tokens and next-round scheduling instructions back to draft servers.\\

We see that the receiving and verification dominate total wall time, while sending time contributes less than 0.1\%. The increased receiving time in \OurAlg and Random-S results from variable drafting lengths, requiring the verification server to wait for the slowest client (largest $S_i$), unlike Fixed-S, which enables faster batch formation without this bottleneck. In contrast, \OurAlg achieves ~5\% lower verification time than Fixed-S, reflecting more efficient workload balancing under dynamic client behavior.
\vspace{-1mm}
\subsubsection{Convergence of Utility Function}
Figure \ref{fig:utility_conv} illustrates the utility function $ U(\cdot) $ computed on the empirical average goodput $\bar{x}(T) = \frac{1}{T} \sum_{t=1}^{T} x(t)$ over the first 600 iterations, showing that the \OurAlg curve starts lower due to initial exploration but rises steadily, stabilizing by approximately iteration 400, and consistently surpasses the utilities of the baselines (``Fixed-$S$" for uniform allocation and ``Random-$S$" for stochastic assignment).

The observed stabilization of $ U(\bar{x}(T)) $ within 600 iterations shows that the system is converging, which can be explained by looking at the utility function and how the system works. The utility function $ U(x) = \sum_{i=1}^N \log x_i $ is smooth and achieves the highest value at the optimal $ x^* $. As the smoothed goodput estimate $ X^\beta(t) $ gets closer to $ x^* $ over time (c.f. Theorem \ref{theorem:2}), the value of $ U(X^\beta(t)) $ rises and levels off toward $ U(x^*) $. This smoothed estimate $ X^\beta(t) $ helps decide how many speculative tokens $ S_i(t) $ each client gets, using a simple adjustment method, which moves the system toward the optimal allocation. Therefore, the observed stabilization of $ U(\bar{x}(T)) $ within 600 iterations suggests that the utility reaching a steady value close to $ U(x^*) $ after the initial adjustments, matching Theorem \ref{theorem:main}. With $\beta = 0.5$, the threshold $ T/\beta \approx 400-600 $ iterations (derived from $ T \approx 200-300 $ rounds) aligns well with the stabilization range, indicating the system surpasses the point where the probability of significant deviation $\|X^\beta(t) - x^*\| > \epsilon$ becomes negligible, a trend supported by the lack of oscillations after 400 iterations. 

\begin{figure*}[t]
    \centering
    \begin{subfigure}[b]{0.96\textwidth}
        \centering
        \includegraphics[width=\textwidth]{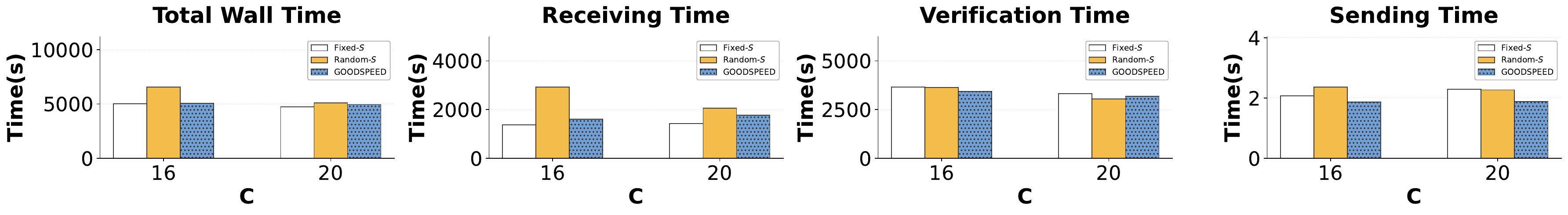}
        \caption{Qwen3 – 8 clients}
        \label{fig:time-dist-qwen3}
    \end{subfigure}
    \hfill
    \begin{subfigure}[b]{0.96\textwidth}
        \centering
        \includegraphics[width=\textwidth]{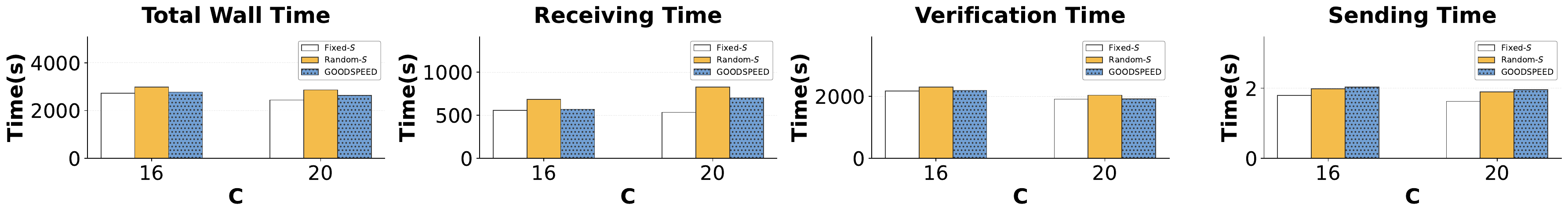}
        \caption{Llama3 – 8 clients}
        \label{fig:time-dist-llama3}
    \end{subfigure}
    \caption{Comparison of time distribution across different LLMs with 8 clients.}
    \label{fig:time-distribution}
\end{figure*}

\label{exp:conv_util_func}
\begin{figure*}[t]
    \centering
    \begin{subfigure}[b]{0.32\textwidth}
        \centering
        \includegraphics[width=\textwidth]{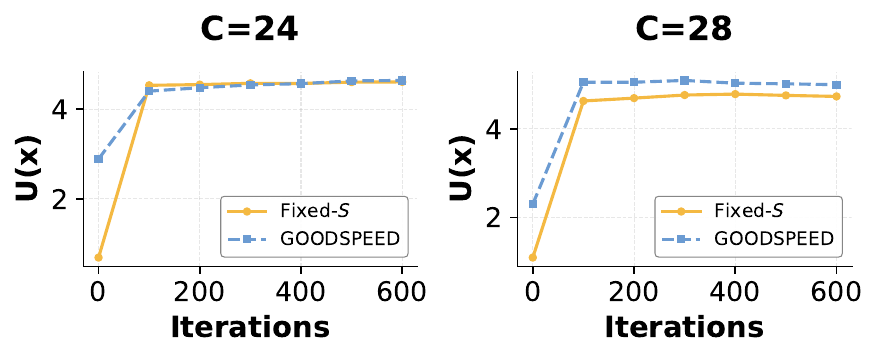}
        \caption{Qwen3 – 4 clients}
        \label{fig:utility-qwen3-4}
    \end{subfigure}
    \begin{subfigure}[b]{0.33\textwidth}
        \centering
        \includegraphics[width=\textwidth]{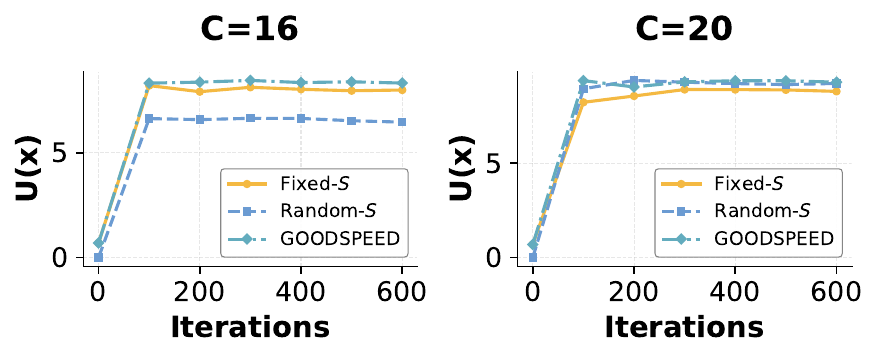}
        \caption{Qwen3 – 8 clients}
        \label{fig:utility-qwen3-8}
    \end{subfigure}
    \begin{subfigure}[b]{0.33\textwidth}
        \centering
        \includegraphics[width=\textwidth]{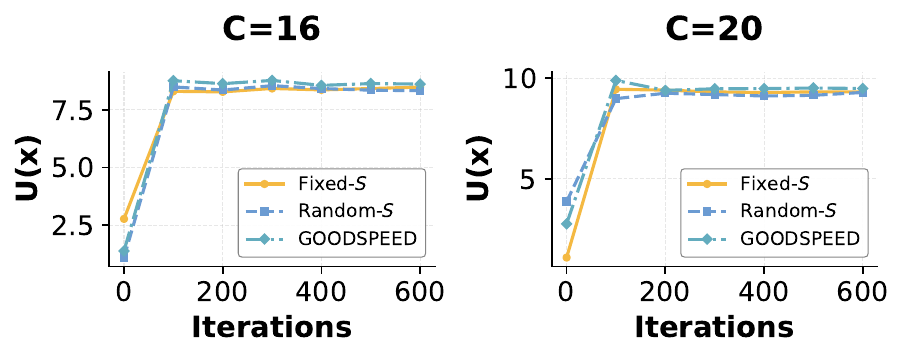}
        \caption{Llama3 – 8 clients}
        \label{fig:utility-llama3-8}
    \end{subfigure}

    \caption{Convergence of utility function $U$ over iterations for different models and client settings.}
    \label{fig:utility_conv}
\end{figure*}

\section{Conclusion}

We propose \OurAlg, a distributed LLM inference framework that combines speculative decoding with gradient-based scheduling for efficient and adaptive resource allocation between a verification server and multiple draft servers. By formulating inference as a utility maximization problem over the achievable goodput region, we define a clear performance benchmark. Our scheduling algorithm approximates the optimal allocation in real time, adapting to non-stationary acceptance rates. Experiments demonstrate fast convergence within 600 iterations and superior efficiency and fairness compared to baselines across diverse datasets and model configurations.
\section{Appendix}
\label{ap:proof}
In this section, we establish a sketch of proof for the asymptotic convergence of the \OurAlg framework to the optimal goodput  $x^*$ from the problem \eqref{primal}. 
We begin by producing key lemmas on fluid sample path (inspired from \cite{Stolyar}) properties, and finally prove the main theorem on process convergence, uniform attraction, and asymptotic optimality. We consider a sequence of systems indexed by $\beta \rightarrow 0$ along $\mathscr{B} = \{\beta_j, j = 1, 2, \dots\}$ with $\beta_j > 0$. Processes are superscripted by $\beta$, e.g., $X^\beta(t)$, $\alpha^\beta(t)$, $S^\beta(t)$. The fluid-scaled processes are defined as follows: 
\begin{itemize}
    \item Scaled goodput estimates: $x^\beta(t) = X^\beta(t / \beta)$, $t \in \mathbb{R}_{+}$
    \item Scaled acceptance rates: $\tilde{\alpha}^\beta(t) = \alpha^\beta(t / \beta)$
    \item Scaled realized goodputs: $\hat{f}^\beta(t) = \beta \sum_{l=1}^{\lfloor t / \beta \rfloor} x^\beta(l)$, where $x^\beta(l)$ is the realized goodput vector at step $l$
    \item Scaled allocation counts: $\hat{g}_{k}^\beta(t) = \beta \hat{G}_{k}^\beta(t / \beta)$, where $\hat{G}_{k}^\beta(\tau)$ counts steps up to $\tau$ using decision $k \in \mathcal{K}$`
\end{itemize}

The composite scaled process is $z^\beta = (x^\beta, \tilde{\alpha}^\beta, \hat{f}^\beta, \hat{g}^\beta)$. A fixed set of functions $z = (x, \tilde{\alpha}, \hat{f}, \hat{g})$ is defined as a fluid sample path (FSP) if there exists a subsequence $\mathscr{B}_0 \subseteq \mathscr{B}$ such that $z^\beta \to z$ uniformly on compact sets (u.o.c.) as $\beta \to 0$ along $\mathscr{B}_0$, with the initial condition satisfying $\|x(0)\| < \infty$.

\begin{lemma}
\label{lemma:1}
\textbf{(Basic Properties of FSPs)}  
For any FSP $z$, all component functions are Lipschitz continuous on $[0, \infty)$ with Lipschitz constant $C + \|x(0)\|$, where $C > 0$ is a constant depending only on the system parameters (including bounds $\underline{\mu}$ and $\bar{\mu}$). Moreover, the functions $\hat{f}(t)$ and $\hat{g}_{k}(t)$ are nondecreasing and satisfy the relations  
\begin{align*}
\hat{f}(t) = \sum\nolimits_{k \in \mathcal{K}} \mu(k; \tilde{\alpha}(t)) \hat{g}_{k}(t), \quad t \geq 0 
\end{align*}
where $\mu(k; \tilde{\alpha}(t))$ is the goodput vector evaluated at the scaled acceptance rates $\tilde{\alpha}(t)$, and the total time measure is normalized as $\sum_{k \in \mathcal{K}} \hat{g}_{k}(t) = t$ for $t \geq 0$.
\end{lemma}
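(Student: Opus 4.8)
\textbf{Proof proposal for Lemma~\ref{lemma:1}.}

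The plan is to establish the three claimed properties in order of increasing dependency: first the relation $\hat{f}(t) = \sum_{k} \mu(k;\tilde\alpha(t))\,\hat g_k(t)$ and the normalization $\sum_k \hat g_k(t) = t$, then the monotonicity of $\hat f$ and $\hat g_k$, and finally the uniform Lipschitz bound. I would begin from the prelimit definitions. By construction, $\hat g_k^\beta(t) = \beta \hat G_k^\beta(\lfloor t/\beta\rfloor)$ where $\hat G_k^\beta$ counts the steps using decision $k$, so $\sum_{k\in\mathcal{K}}\hat g_k^\beta(t) = \beta\lfloor t/\beta\rfloor \to t$ u.o.c. as $\beta\to 0$; passing to the FSP limit along the subsequence $\mathscr B_0$ gives $\sum_k \hat g_k(t) = t$. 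Since each $\hat G_k^\beta(\cdot)$ is a counting process it is nondecreasing in its argument, hence $\hat g_k^\beta$ is nondecreasing in $t$, and this property is preserved under u.o.c. limits; similarly $\hat f^\beta(t) = \beta\sum_{l=1}^{\lfloor t/\beta\rfloor} x^\beta(l)$ is nondecreasing because the realized goodput vectors $x^\beta(l)$ are coordinatewise nonnegative, so $\hat f$ is nondecreasing in the limit.

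For the key identity, I would decompose the realized-goodput sum according to which scheduling decision was active at each step: $\hat f^\beta(t) = \beta\sum_{k\in\mathcal K}\sum_{l : S^\beta(l) = k} x^\beta(l)$. Conditioned on decision $k$ being used at step $l$, the expected realized goodput at that step equals $\mu(k;\tilde\alpha^\beta(l))$ by the geometric-random-variable formula for $\mu_i(k)$ (truncated at $S_i+1$). Averaging over the steps where $k$ is active, the empirical mean concentrates around $\mu(k;\tilde\alpha(t))$ for small $\beta$ because (i) the per-step goodputs are bounded (by $C+1$, since at most $C$ tokens are drafted plus one correction), so a strong law / martingale concentration argument controls the fluctuation of $\beta\sum_{l:S^\beta(l)=k} x^\beta(l)$ around $\beta\sum_{l:S^\beta(l)=k}\mu(k;\tilde\alpha^\beta(l))$, and (ii) the scaled acceptance rate $\tilde\alpha^\beta$ converges u.o.c. to the continuous limit $\tilde\alpha$ while $\mu(k;\cdot)$ is continuous, so $\beta\sum_{l:S^\beta(l)=k}\mu(k;\tilde\alpha^\beta(l)) \to \int \mu(k;\tilde\alpha(s))\,d\hat g_k(s)$; combined with the Lipschitz continuity of $\hat g_k$ and $\tilde\alpha$ established below, a change of variables collapses this to the stated pointwise product form $\mu(k;\tilde\alpha(t))\,\hat g_k(t)$ after differentiating. (More precisely one proves $\frac{d}{dt}\hat f(t) = \sum_k \mu(k;\tilde\alpha(t))\frac{d}{dt}\hat g_k(t)$ at almost every $t$ and integrates; I would phrase the final statement as the integrated relation, or note that in the Stolyar-style formulation it holds in the stated product form because $\hat g_k$ is absolutely continuous.)

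For the Lipschitz bound, the essential observation is that over any prelimit time interval $[s,t]$ the increment $\|\hat f^\beta(t) - \hat f^\beta(s)\| \le \beta(\lfloor t/\beta\rfloor - \lfloor s/\beta\rfloor)\cdot(C+1) \le (C+1)(t-s) + O(\beta)$, since each step contributes at most $C+1$ to the realized goodput; the same bound with a possibly larger system constant controls $\hat g_k^\beta$, $x^\beta$ (whose increments over $[s,t]$ are governed by how far $X^\beta$ can drift under the exponential-smoothing update \eqref{updateX}, namely $|X_i^\beta(\tau+1) - X_i^\beta(\tau)| = \beta|x_i(\tau+1) - X_i^\beta(\tau)| \le \beta\cdot\mathrm{const}$, giving a rate-$\mathrm{const}$ Lipschitz bound after unscaling), and $\tilde\alpha^\beta$ (via Assumption~\ref{assumption:2}, $|\hat\alpha_i(\tau+1)-\hat\alpha_i(\tau)|\le L\eta$, together with Assumption~\ref{assumption:3} ensuring $\eta/\beta\to 0$ so the scaled acceptance path becomes Lipschitz with a vanishing-to-constant rate). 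Taking $\beta\to 0$ along $\mathscr B_0$ transfers each prelimit Lipschitz estimate to the limit $z$, and collecting the constants yields a common Lipschitz constant of the form $C + \|x(0)\|$, where the $\|x(0)\|$ term enters because the drift of $x^\beta$ in \eqref{updateX} is proportional to $|x_i(\tau+1) - X_i^\beta(\tau)|$, which is bounded in terms of $\|X^\beta(0)\|$ plus a system constant.

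I expect the main obstacle to be step two, the derivation of the product relation $\hat f(t) = \sum_k \mu(k;\tilde\alpha(t))\hat g_k(t)$: one must carefully justify the interchange of the empirical-average limit with the (time-varying) acceptance rate inside $\mu(k;\cdot)$, which requires a uniform concentration argument for the realized goodputs that simultaneously handles the non-stationarity of $\tilde\alpha^\beta$ and the fact that the set $\{l : S^\beta(l)=k\}$ is itself determined by the algorithm's state. The cleanest route is probably a functional strong law of large numbers / martingale-difference decomposition of $\hat f^\beta - \text{(its compensator)}$, bounding the martingale part by $O(\sqrt\beta)$ using the uniform boundedness of per-step goodputs, and then identifying the compensator's limit via the u.o.c. convergence of $\tilde\alpha^\beta$ and $\hat g_k^\beta$ — a standard but delicate piece of the fluid-limit machinery that I would cite from \cite{Stolyar} where analogous statements are proved.
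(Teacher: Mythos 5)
Your proposal follows essentially the same route as the paper's proof: monotonicity and the normalization $\sum_k \hat g_k(t)=t$ from the prelimit counting definitions, the Lipschitz bound from the per-step increment $|X_i^\beta(l)-X_i^\beta(l-1)|\le\beta(2\bar\mu+X_i^\beta(0))$ (the paper arrives at the constant $C=2\bar\mu+L$ plus the $\|x(0)\|$ term in exactly the way you describe), and the goodput relation by identifying the limit of the decision-indexed decomposition of $\hat f^\beta$. Where you differ is in rigor rather than strategy: the paper justifies the relation $\hat f(t)=\sum_k\mu(k;\tilde\alpha(t))\hat g_k(t)$ with a single sentence ("the realized goodput is the accepted portion of allocated tokens"), whereas you correctly observe that this step requires a martingale/strong-law concentration argument to replace realized goodputs by their conditional means, and --- more importantly --- that when $\tilde\alpha$ is time-varying the natural limit is the integral $\int_0^t\mu(k;\tilde\alpha(s))\,d\hat g_k(s)$, which does not reduce to the stated pointwise product $\mu(k;\tilde\alpha(t))\hat g_k(t)$ in general. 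That discrepancy is a real gap in the lemma as stated (it is harmless only in the stationary or slowly-varying regime enforced later by Assumption~\ref{assumption:3}), and your proposal is the more defensible formulation; the paper's proof simply asserts the product form without addressing it.
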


\begin{proof}

The discrete update for goodput estimates is given by $ X_n^\beta(l) = (1 - \beta) X_n^\beta(l-1) + \beta x_n^\beta(l) $, where $ x_n^\beta(l) $ is the realized goodput at step $ l $, bounded by $\bar{\mu}$ due to the verification constraint. The difference between consecutive updates is
\[
|X_n^\beta(l) - X_n^\beta(l-1)| \leq \beta (2\bar{\mu} + X_n^\beta(0)),
\]
since $ x_n^\beta(l) \leq \bar{\mu} $ and the initial condition $ X_n^\beta(0) $ is finite. For acceptance rates, the update rule \ref{updatealpha}
results in a change bounded by $\eta$ times the maximum variation of the acceptance indicator, which is at most 1. Scaling these updates by $\beta$ and taking the uniform on compact sets (u.o.c.) limit as $\beta \to 0$ along $\mathscr{B}_0$, the Lipschitz constant of the limit functions is determined as $ C = 2\bar{\mu} + L $, where $ L $ arises from the Lipschitz continuity of $\alpha_i(t)$ as per Assumption \ref{assumption:2}. The nondecreasing property of $\hat{f}^\beta(t)$ and $\hat{g}_{k}^\beta(t)$ stems from their definition as cumulative sums over discrete steps. In the limit, $\hat{f}(t) = \lim_{\beta \to 0} \hat{f}^\beta(t)$ and $\hat{g}_{k}(t) = \lim_{\beta \to 0} \hat{g}_{k}^\beta(t)$ retain this monotonicity. The relation $\hat{f}(t) = \sum_{k \in \mathcal{K}} \mu(k; \tilde{\alpha}(t)) \hat{g}_{k}(t)$ holds because the realized goodput is the accepted portion of allocated tokens, with $\mu(k; \tilde{\alpha}(t))$ representing the expected goodput for allocation $ k $ given $\tilde{\alpha}(t)$. 
\end{proof}
\begin{lemma}
\label{lemma:2}
\textbf{(Differential Inclusion and Boundary Conditions)}  
For almost all $ t \geq 0 $, an FSP $ z $ satisfies  
\[
x'(t) = v(t) - x(t),
\]    
\[
v(t) = \sum\nolimits_{k \in \mathcal{K}} \hat{g}_{k}'(t) \mu(k; \tilde{\alpha}(t)) \in \arg\max_{v \in \mathcal{X}(t)} \sum\nolimits_{i=1}^N \frac{1}{x_i(t)} v_i,
\]  
and $\mathcal{X}(t)$ uses $\mu(k; \tilde{\alpha}(t))$. If $ x_B(t) = 0 $ for $ B \subseteq \{1, \dots, N\} $ and $ x_i(t) > 0 $ for $ i \notin B $, then  with $ c \geq \underline{\mu} $, we have $\frac{d_+}{dt} \sum_{i \in B} x_i(t) \geq c > 0$.
\end{lemma}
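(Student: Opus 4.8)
\textbf{Proof plan for Lemma~\ref{lemma:2}.}

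The plan is to pass to the fluid limit in the discrete update $X_i^\beta(l) = (1-\beta)X_i^\beta(l-1) + \beta\, x_i^\beta(l)$ and identify the resulting ODE, then use the structure of the \hyperlink{goodspeedsched}{\OurAlg-\textsc{sched}} optimization to characterize the drift term $v(t)$, and finally analyze the behavior near the boundary $\{x_i = 0\}$ separately. For the first part, rewriting the update as $X_i^\beta(l) - X_i^\beta(l-1) = \beta\big(x_i^\beta(l) - X_i^\beta(l-1)\big)$ and summing over $l \le \lfloor t/\beta\rfloor$, the scaled process satisfies $x_i^\beta(t) - x_i^\beta(0) = \hat f_i^\beta(t) - \int_0^t x_i^\beta(s)\,ds + o(1)$, where $\hat f_i^\beta$ is the scaled cumulative realized goodput. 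By Lemma~\ref{lemma:1}, along the FSP subsequence $\hat f^\beta \to \hat f$ with $\hat f(t) = \sum_{k\in\mathcal K}\mu(k;\tilde\alpha(t))\hat g_k(t)$ and $x^\beta \to x$ u.o.c., so in the limit $x_i(t) - x_i(0) = \hat f_i(t) - \int_0^t x_i(s)\,ds$. Since all FSP components are Lipschitz (Lemma~\ref{lemma:1}) and hence differentiable a.e., differentiating gives $x_i'(t) = \hat f_i'(t) - x_i(t)$ for a.e.\ $t$, and setting $v_i(t) := \hat f_i'(t) = \sum_{k}\hat g_k'(t)\mu_i(k;\tilde\alpha(t))$ yields $x'(t) = v(t) - x(t)$. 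The normalization $\sum_k \hat g_k(t) = t$ from Lemma~\ref{lemma:1} gives $\sum_k \hat g_k'(t) = 1$ a.e., so $v(t)$ is a convex combination of the vertices $\mu(k;\tilde\alpha(t))$, i.e.\ $v(t)\in\mathcal X(t)$.

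The second part is to show $v(t)$ is the \emph{maximizer} of $\sum_i v_i/x_i(t)$ over $\mathcal X(t)$. The key observation is that the scheduler at step $l$ solves \eqref{eq:goodspeed_sched}, i.e.\ it picks $S(l+1)$ maximizing $\sum_i \nabla U_i(X_i^\beta(l))\,\hat x_i(l+1) = \sum_i \hat x_i(l+1)/X_i^\beta(l)$ (using $U_i = \log$) subject to $\sum_i S_i \le C$; equivalently it picks the decision $k \in \mathcal K$ maximizing $\sum_i \mu_i(k;\hat\alpha(l))/X_i^\beta(l)$, since $\hat x_i$ is exactly $\mu_i$ evaluated at the estimated acceptance rate. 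Thus $\hat g_k^\beta$ only increments on decisions $k$ that are (approximately) optimal for the current $X^\beta$. Passing to the limit: because $X^\beta(l) \to x(t) $ and $\hat\alpha(l)\to\tilde\alpha(t)$ along the FSP, and $\mathcal K$ is finite, any decision with $\hat g_k'(t) > 0$ must satisfy $\sum_i \mu_i(k;\tilde\alpha(t))/x_i(t) = \max_{k'\in\mathcal K}\sum_i \mu_i(k';\tilde\alpha(t))/x_i(t)$. Averaging over such $k$ with weights $\hat g_k'(t)$, and noting the max of a linear functional over $\mathcal X(t)$ is attained at a vertex $\mu(k;\tilde\alpha(t))$, gives $v(t)\in\arg\max_{v\in\mathcal X(t)}\sum_i v_i/x_i(t)$. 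A minor technical wrinkle here is handling the integer rounding in $S_i(l+1)\in\mathbb Z_+$ versus the continuous $\mathcal X(t)$, and the discrepancy between $\hat\alpha$ (estimated) and $\tilde\alpha$ (the fluid limit of the true rates); both are controlled because Assumption~\ref{assumption:3} forces $\eta/\beta\to0$, so the acceptance estimates vary slowly on the fluid timescale, and because $\mu_i(k;\cdot)$ is continuous and the feasible integer decisions are finitely many and bounded by $C$.

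For the boundary statement, suppose $x_B(t)=0$ and $x_i(t)>0$ for $i\notin B$. The point is that when some coordinates are zero, the objective $\sum_i \mu_i(k;\tilde\alpha(t))/x_i(t)$ is dominated by the terms $i\in B$ (which blow up), so the scheduler allocates essentially all of the budget $C$ to servers in $B$: any optimal $v(t)$ must put $v_i(t) = \mu_i(k;\tilde\alpha(t))$ for the decision $k$ that maximizes $\sum_{i\in B}\mu_i(k;\tilde\alpha(t))/x_i(t)$, and for such $k$ we have $\mu_i(k;\tilde\alpha(t)) \ge \underline\mu$ for every $i\in B$ by the non-degeneracy bound $\underline\mu \le \mu_i(k) \le \bar\mu$. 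Hence $v_i(t)\ge\underline\mu$ for $i\in B$, and since $x_i(t)=0$ there, the ODE gives $x_i'(t) = v_i(t) - x_i(t) = v_i(t) \ge \underline\mu$; summing over $i\in B$ yields $\frac{d_+}{dt}\sum_{i\in B}x_i(t) \ge |B|\,\underline\mu \ge \underline\mu =: c > 0$. (One uses the right derivative $d_+/dt$ since $x_B$ has just touched zero; Lipschitz continuity guarantees it exists.) To make "dominated by the $i\in B$ terms" rigorous one takes a limit of scheduling decisions as $x^\beta \to x$ with $x_B^\beta \to 0$: since the finite set of decisions $\mathcal K$ cannot change the identity of the maximizer once the $1/x_i^\beta$ terms for $i \in B$ are large enough, the maximizing decision is eventually one of those favoring $B$, and continuity of $\mu$ in $\tilde\alpha$ closes the argument.

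\textbf{Main obstacle.} I expect the hardest step to be the rigorous identification of $v(t)$ as the \emph{argmax} — specifically, justifying the interchange of "the discrete scheduler is optimal for $X^\beta(l)$ and $\hat\alpha(l)$" with "the fluid drift is optimal for $x(t)$ and $\tilde\alpha(t)$" uniformly on compact time sets. This requires carefully controlling three sources of error simultaneously: (i) the gap between the smoothed estimate $\hat\alpha_i(l)$ and the fluid-limit value $\tilde\alpha_i(t)$, handled via $\eta/\beta\to 0$ from Assumption~\ref{assumption:3}; (ii) the gap between $X^\beta$ at a discrete step and the continuous $x(t)$, handled via the u.o.c.\ convergence and Lipschitz bounds of Lemma~\ref{lemma:1}; and (iii) the set-valued/measurability subtleties of writing $\hat g_k'(t)$ as a legitimate density and arguing that it is supported on the argmax set for a.e.\ $t$ — this is the standard but delicate part of fluid-limit arguments and is where following the template of \cite{Stolyar} most closely is essential.
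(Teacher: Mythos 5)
Your proposal follows essentially the same route as the paper's proof: deriving the ODE $x'(t)=v(t)-x(t)$ from the integral form of the fluid-scaled exponential-smoothing update, identifying $v(t)$ with the argmax via the gradient-scheduling rule (with $\nabla U_i = 1/x_i$), and obtaining the boundary drift from the blow-up of $1/x_i$ together with the non-degeneracy bound $\underline{\mu}$. Your treatment is in fact more careful than the paper's sketch on the limit-interchange issues (the $\hat{\alpha}$ vs.\ $\tilde{\alpha}$ gap, integer rounding, and the support of $\hat{g}_k'$ on the argmax set), which the paper leaves implicit.
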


\begin{proof}
The differential equation is derived from the integral form of the FSP. The scaled goodput update leads to the relation $ x_n(t) - x_n(0) = \hat{f}_n(t) - \int_0^t x_n(\xi) d\xi $, where $\hat{f}_n(t)$ is the cumulative realized goodput. Differentiating with respect to $ t $ at points of continuity (almost everywhere due to Lipschitz continuity from Lemma \ref{lemma:1}), we obtain
 $x'(t) = \frac{d}{dt} \hat{f}(t) - x(t)$. Since $\hat{f}(t) = \sum_{k \in \mathcal{K}} \mu(k; \tilde{\alpha}(t)) \hat{g}_{k}(t)$ and $\hat{g}_{k}(t)$ is absolutely continuous as a limit of step functions, the derivative is $\frac{d}{dt} \hat{f}(t) = \sum_{k \in \mathcal{K}} \hat{g}_{k}'(t) \mu(k; \tilde{\alpha}(t))$. The scheduling algorithm selects $\hat{g}_{k}'(t)$ to maximize $\nabla U(x(t)) \cdot v$, where $\nabla U_i(x_i) = \frac{1}{x_i}$ for $ U(x) = \sum_i \log x_i $. Thus,
\[
v(t) = \sum\nolimits_{k \in \mathcal{K}} \hat{g}_{k}'(t) \mu(k; \tilde{\alpha}(t)) \in \arg\max_{v \in \mathcal{X}(t)} \sum\nolimits_{i=1}^N \frac{1}{x_i(t)} v_i,
\]
reflecting the logarithmic utility’s emphasis on fairness by prioritizing clients with lower $ x_i(t) $. For the boundary condition, if $ x_B(t) = 0 $ for some $ B $, the gradient $\frac{1}{x_i(t)} \to +\infty$ as $ x_i(t) \to 0^+ $, compelling the maximization to allocate tokens to $ B $. Since $\mathcal{X}(t)$ includes points $\mu(k; \tilde{\alpha}(t))$ with $\sum_{i \in B} \mu_i(k; \tilde{\alpha}(t)) \geq \underline{\mu}$ for some $ k $ (due to $\underline{\mu} > 0$), and $\sum_{k} \hat{g}_{k}'(t) = 1$ with $\hat{g}_{k}'(t) \geq 0$, we have $\frac{d_+}{dt} \sum_{i \in B} x_i(t)$
\[
 = \sum_{i \in B} \sum_{k \in \mathcal{K}} \hat{g}_{k}'(t) \mu_i(k; \tilde{\alpha}(t)) \geq \underline{\mu} \sum_{k: \mu_B(k) \geq \underline{\mu}} \hat{g}_{k}'(t) \geq \underline{\mu} > 0\]
where $\mu_B(k) = \sum_{i \in B} \mu_i(k; \tilde{\alpha}(t))$, ensuring a positive right-hand derivative.
\end{proof}
\begin{theorem}
\label{theorem:1}
\textbf{(Process-Level Convergence):} The sequence $\{z^\beta\}$ is relatively compact, and any weak limit is concentrated on FSPs with probability 1.
\end{theorem}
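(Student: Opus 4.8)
\textbf{Proof proposal for Theorem~\ref{theorem:1}.}

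The plan is to establish relative compactness of $\{z^\beta\}$ via a tightness argument and then identify the weak limits as FSPs using the structural relations from Lemmas~\ref{lemma:1} and~\ref{lemma:2}. First I would verify that the scaled initial conditions remain in a bounded set: since $X^\beta(0) \in A$ with $A$ bounded, we have $\|x^\beta(0)\| = \|X^\beta(0)\|$ uniformly bounded along $\mathscr{B}$. Next I would show that each component of $z^\beta = (x^\beta, \tilde{\alpha}^\beta, \hat{f}^\beta, \hat{g}^\beta)$ is (asymptotically, as $\beta\to 0$) uniformly Lipschitz on compact time intervals with probability~$1$: for $x^\beta$ this follows from the exponential-smoothing recursion $X^\beta_n(l) = (1-\beta)X^\beta_n(l-1) + \beta x^\beta_n(l)$ together with the uniform bound $x^\beta_n(l) \le \bar{\mu}$ from the verification constraint, exactly as in the proof of Lemma~\ref{lemma:1}, giving increments $O(\beta)$ per discrete step and hence an $O(1)$ Lipschitz constant after the $t\mapsto t/\beta$ time change; for $\tilde{\alpha}^\beta$ it follows from Assumption~\ref{assumption:2} (increments bounded by $L\eta$, with $\eta/\beta\to 0$ by Assumption~\ref{assumption:3}, so in the $\beta$-scaled clock the acceptance-rate path becomes Lipschitz with a vanishing or bounded constant); and for $\hat{f}^\beta, \hat{g}^\beta$ it follows because they are $\beta$ times cumulative sums of terms bounded by $\bar{\mu}$ and by $1$ respectively, so their increments over a unit of scaled time are $O(1)$. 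Equiboundedness on compacts then follows from equi-Lipschitzness plus bounded initial data. By the Arzel\`a--Ascoli theorem applied pathwise (and Prokhorov's theorem to pass to the level of laws), the family $\{z^\beta\}$ is relatively compact in the space of continuous functions on compacts with the u.o.c.\ topology.

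Second, I would show any subsequential weak limit $z$ is almost surely an FSP. By the Skorokhod representation theorem we may assume $z^\beta \to z$ u.o.c.\ almost surely along some $\mathscr{B}_0 \subseteq \mathscr{B}$. The limit inherits Lipschitz continuity and monotonicity of $\hat{f}, \hat{g}_k$ from the prelimit bounds (closedness of these properties under u.o.c.\ convergence). The algebraic identity $\hat{f}(t) = \sum_{k\in\mathcal{K}} \mu(k;\tilde{\alpha}(t))\,\hat{g}_k(t)$ and the normalization $\sum_{k\in\mathcal{K}} \hat{g}_k(t) = t$ pass to the limit because in the prelimit $\hat{f}^\beta(t) = \beta\sum_{l\le \lfloor t/\beta\rfloor} x^\beta(l)$ can be rewritten by partitioning the steps according to which decision $k$ was used, $\hat{f}^\beta(t) = \sum_k \beta \sum_{l: \text{decision }k} \mu(k;\tilde{\alpha}^\beta(l)) + (\text{martingale/noise})$, where the noise term is the $\beta$-weighted sum of the mean-zero fluctuations $x^\beta(l) - \mathbb{E}[x^\beta(l)\mid\mathcal{F}_{l-1}]$; a Doob/Azuma estimate shows this martingale, scaled by $\beta$, converges to $0$ u.o.c.\ in probability (its quadratic variation is $O(\beta)$ per unit scaled time), and the continuity of $\mu(k;\cdot)$ together with $\tilde{\alpha}^\beta \to \tilde{\alpha}$ u.o.c.\ lets us replace $\mu(k;\tilde{\alpha}^\beta(l))$ by $\mu(k;\tilde{\alpha}(t))$ in the limit. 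This yields precisely the relation in Lemma~\ref{lemma:1}; the differential-inclusion characterization of Lemma~\ref{lemma:2}, including the $\arg\max$ structure of $v(t)$, then follows at Lebesgue-almost-every $t$ by differentiating the integral identity and using that the prelimit scheduler exactly solves \eqref{eq:goodspeed_sched} at each step with $\nabla U_i = 1/X^\beta_i$, whose limit is $1/x_i(t)$ wherever $x_i(t)>0$. Hence $z$ satisfies the FSP definition almost surely.

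The main obstacle I expect is the interplay between the two time scales $\eta$ and $\beta$ in the noise/averaging argument: one must verify that the fast-varying acceptance-rate estimates $\hat{\alpha}_i(t)$ are ``seen'' by the slow $X^\beta$ dynamics only through their time averages, which is where Assumption~\ref{assumption:3} ($\eta/\beta\to 0$) and the ergodicity of Assumption~\ref{assumption:1} must be combined carefully---the scheduler's decision $k$ at step $l$ depends on $\hat{\alpha}^\beta(l)$, so the ``partition by decision $k$'' in the previous paragraph couples the (fast) acceptance process to the (slow) allocation counts, and one needs a two-time-scale stochastic-approximation averaging estimate (in the spirit of Borkar, or the ODE method) to show the coupling closes consistently in the limit. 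A secondary technical point is handling the integer constraint $S_i(t+1)\in\mathbb{Z}_+$: the discrete feasible set $\mathcal{K}$ is finite, so $\mathcal{X}(t)$ is a fixed finite-vertex polytope (depending continuously on $\tilde{\alpha}(t)$), which keeps the $\arg\max$ well-behaved and the set of maximizers upper hemicontinuous---this is routine but must be invoked to justify that $\hat{g}_k'(t)$ can be chosen as a measurable selection supported on optimal decisions. Everything else is a standard fluid-limit bookkeeping exercise once these two points are settled.
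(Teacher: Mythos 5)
Your proposal follows essentially the same route as the paper's (very terse) proof: uniform Lipschitz increment bounds from $\bar{\mu}$ and $\hat{\alpha}_{\max}$ give relative compactness, Skorokhod representation upgrades weak limits to almost-sure u.o.c.\ convergence, and the scaled update consistency plus Assumptions~\ref{assumption:1}--\ref{assumption:3} identify the limits as FSPs. You supply considerably more detail than the paper does—in particular the martingale decomposition of $\hat{f}^\beta$ and the two-time-scale averaging between $\eta$ and $\beta$—and these additions are correct and in the spirit of what the paper's one-line "consistency" claim implicitly relies on.
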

\begin{proof}
Relative compactness holds as $ |z^\beta(t_2) - z^\beta(t_1)| \leq C^*(t_2 - t_1) + \epsilon $ in probability, from Lipschitz bounds and $\bar{\mu}$, $\alpha_{\max}$. Using Skorohod representation, converging subsequences are almost surely FSPs, as scaled updates and Assumption \ref{assumption:1} ensure consistency.
\end{proof}
\begin{theorem}
\label{theorem:2}
\textbf{(Uniform Attraction of Fluid Sample Paths)} For any bounded set $A \subset \mathbb{R}_{+}^N$, FSPs satisfy $x(t) \to x^*$ as $t \to \infty$, uniformly on $x(0) \in A$.
\end{theorem}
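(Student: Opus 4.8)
The plan is to carry out a Lyapunov argument on the fluid sample path, taking the logarithmic utility gap as the Lyapunov function, after first reducing the time‑varying acceptance rates to a frozen constant so that the target $x^*$ is well defined. \textbf{Step 1 (freeze the acceptance rates).} I would first show that along any FSP the scaled acceptance process $\tilde\alpha(t)$ is constant on $[0,\infty)$, equal to the ergodic average $\bar\alpha=(\bar\alpha_1,\dots,\bar\alpha_N)$ of Assumption~\ref{assumption:1}. Per discrete step $\hat\alpha_i$ moves by at most $L\eta$ (Assumption~\ref{assumption:2}); the $\lfloor t/\beta\rfloor$ steps compressed into fluid time $t$ therefore contribute total variation $O\!\big(Lt\,\eta/\beta\big)$, which vanishes as $\beta\to0$ by Assumption~\ref{assumption:3} ($\eta/\beta\to0$). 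Hence the limit function has zero modulus of continuity and is constant, while Assumption~\ref{assumption:1} pins its value to $\bar\alpha$. Consequently $\mathcal{X}(t)\equiv\mathcal{X}$ is the fixed convex compact goodput region of Section~\ref{sec:primal}, and $x^*$ — the unique maximizer of $U$ over $\mathcal{X}$, which is an element of $\mathcal{X}$ since that maximum is attained — is a time‑invariant target.

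\textbf{Step 2 (Lyapunov drift).} Set $\Phi(x)=U(x^*)-U(x)$, which is nonnegative, continuous on the open orthant, strictly positive off $x^*$ (strict concavity plus uniqueness of $x^*$), and blows up near $\partial\mathbb{R}_+^N$. By Lemma~\ref{lemma:2}, for a.e.\ $t>0$ at which $x(t)$ is interior,
\[
\tfrac{d}{dt}\Phi(x(t)) = -\nabla U(x(t))\cdot\big(v(t)-x(t)\big),
\]
and since $v(t)$ maximizes $\nabla U(x(t))\cdot v$ over $\mathcal{X}$ while $x^*\in\mathcal{X}$ we have $\nabla U(x(t))\cdot v(t)\ge\nabla U(x(t))\cdot x^*$; combining this with the concavity inequality $U(x^*)\le U(x(t))+\nabla U(x(t))\cdot(x^*-x(t))$ yields
\[
\tfrac{d}{dt}\Phi(x(t)) \le -\nabla U(x(t))\cdot(x^*-x(t)) \le -\big(U(x^*)-U(x(t))\big) = -\Phi(x(t)).
\]

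\textbf{Step 3 (uniform confinement and conclusion).} Each coordinate $v_i(t)$ is a convex combination of the values $\mu_i(k;\bar\alpha)\in[\underline{\mu},\bar{\mu}]$, hence lies in $[\underline{\mu},\bar{\mu}]$; integrating $x_i'(t)=v_i(t)-x_i(t)$ gives $\underline{\mu}(1-e^{-t})\le x_i(t)\le\max(\max_j x_j(0),\bar{\mu})$ for all $i,t$ (consistent with, and reinforcing, the boundary drift of Lemma~\ref{lemma:2}). Writing $M_A=\sup_{x\in A}\max_i x_i<\infty$ and $\delta=\underline{\mu}(1-e^{-1})>0$, every FSP satisfies $x(t)\in K:=[\delta,\max(M_A,\bar{\mu})]^N\subset\operatorname{int}\mathbb{R}_+^N$ for all $t\ge t_0:=1$, uniformly over $x(0)\in A$. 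On the compact $K$, $\Phi\le C_A<\infty$, so Gr\"onwall applied to the drift inequality gives $\Phi(x(t))\le C_A e^{-(t-t_0)}$ for $t\ge t_0$, uniformly in $x(0)\in A$. Since $\Phi$ is continuous on $K$ with unique zero $x^*$, the sets $\{x\in K:\Phi(x)\le\epsilon\}$ form a neighborhood base of $x^*$; thus $\Phi(x(t))\to0$ uniformly forces $\|x(t)-x^*\|\to0$ uniformly over $x(0)\in A$, which is the assertion.

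\textbf{Main obstacle.} The delicate step is Step~1 — rigorously freezing $\tilde\alpha(t)$ at $\bar\alpha$ in the fluid limit — since it requires Assumptions~\ref{assumption:1} and~\ref{assumption:3} to be used in tandem (ergodic averaging to identify the value, timescale separation $\eta/\beta\to0$ to annihilate the fluctuations) and is where the interchange of limits along the subsequence $\mathscr{B}_0$ must be justified carefully. The remaining pieces — the Lyapunov drift and the confinement to $K$ — are standard in the Stolyar primal gradient framework, the only care needed being that the concavity inequality is applied along an interior path, which Step~3 guarantees for $t\ge 1$.
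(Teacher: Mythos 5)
Your proof is correct and follows the same high-level strategy as the paper's (a Lyapunov argument on the log-utility combined with a drift that keeps trajectories off the boundary), but you execute it more carefully in three places, each of which buys something the paper's sketch leaves implicit. First, you freeze $\tilde\alpha(t)\equiv\bar\alpha$ \emph{before} running the Lyapunov argument, so that $\mathcal X$ and hence $x^*$ are well-defined time-invariant objects; the paper keeps the time-varying region $\mathcal X(t)$ throughout its Theorem~\ref{theorem:2} proof and only invokes $\eta/\beta\to 0$ later, in Theorem~\ref{theorem:3}, which leaves the target of the attraction ambiguous. Second, where the paper argues only qualitatively that $\frac{d}{dt}U(x(t))>0$ outside $O_\delta(x^*)$, you chain the optimality of $v(t)$ with the concavity inequality to get the quantitative drift $\frac{d}{dt}\Phi(x(t))\le-\Phi(x(t))$, and Gr\"onwall then delivers an explicit exponential rate — this is what actually makes the convergence \emph{uniform} over $x(0)\in A$, a point the paper asserts ("by finite $T_2$") without argument. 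Third, you replace the paper's boundary-drift step (which only controls the degenerate set where some $x_i(t)=0$ exactly) with the ODE comparison $x_i(t)\ge\underline{\mu}(1-e^{-t})$, giving a uniform interior confinement that legitimizes applying the concavity inequality along the whole path for $t\ge1$. The one place where you and the paper share the same unresolved delicacy is the identification of the frozen acceptance vector with the ergodic average $\bar\alpha$ (rather than merely with $\lim_\beta\hat\alpha^\beta(0)$): vanishing total variation over the fluid window plus Ces\`aro convergence of the time averages does not by itself pin the constant to $\bar\alpha$ without a burn-in or a stationarity argument — you correctly flag this as the main obstacle, and the paper does not address it either.
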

\begin{proof}
The distance $\rho(x(t), \mathcal{X}(t)) \leq \rho(x(0), \mathcal{X}(0)) e^{-t}$ decays, as $ x'(t) = v(t) - x(t) $ with $ v(t) \in \mathcal{X}(t) $. Therefore, if $ x_B(t) = 0 $, the inequality $\frac{d_+}{dt} \sum_{i \in B} x_i(t) \geq \underline{\mu} > 0$ ensures $ x_i(t) > 0 $ after $ T_1 $, uniformly on $ A $. And outside $ O_\delta(x^*) $, $\sum_{i=1}^N \frac{1}{x_i(t)} (x^*_i - x_i(t)) > 0$, so $\frac{d}{dt} U(x(t)) > 0$, driving $ U(x(t)) \to U(x^*) $. Uniform convergence holds by finite $ T_2 $.
\end{proof}
\begin{theorem}
\label{theorem:3}
\textbf{(Asymptotic Optimality):} For bounded $A \subset \mathbb{R}_{+}^N$ and $\epsilon > 0$, there exist $T, T^* > 0$ such that
\[
\lim_{\beta \to 0} \sup_{X^\beta(0) \in A, l_1 > T/\beta, l_2 - l_1 > T^*/\beta} \left\| \mathbb{E} U^\beta(l_1, l_2) - x^* \right\| < \epsilon,
\]
where $U^\beta(l_1, l_2)$ is the average realized goodput over $[l_1, l_2]$.
\end{theorem}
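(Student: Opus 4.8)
The plan is to rewrite the time-averaged realized goodput $U^\beta(l_1,l_2)$ as an increment of the fluid-scaled cumulative goodput $\hat{f}^\beta$ and then read off its limit from Lemmas~\ref{lemma:1}--\ref{lemma:2} together with the uniform attraction of Theorem~\ref{theorem:2}. Setting $t_i=\beta l_i$, the definition of $\hat{f}^\beta$ gives $\hat{f}^\beta(t_2)-\hat{f}^\beta(t_1)=\beta\sum_{l=l_1+1}^{l_2}x^\beta(l)=(t_2-t_1)\,U^\beta(l_1,l_2)+O(\beta)$, so it suffices to control $\tfrac{1}{t_2-t_1}(\hat{f}^\beta(t_2)-\hat{f}^\beta(t_1))$. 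First I would note that the per-slot realized goodput is bounded ($x_i(l)\le C+1$), so $U^\beta$ is uniformly bounded; hence $\|\mathbb{E}U^\beta(l_1,l_2)-x^*\|\le\mathbb{E}\|U^\beta(l_1,l_2)-x^*\|$, uniform integrability lets convergence in probability of the $\hat{f}^\beta$-increments upgrade to convergence in expectation, and a convexity reduction shrinks the window: $U^\beta(l_1,l_2)$ is the uniform average of the $x^\beta(l)$ over $(l_1,l_2]$, hence a convex combination of block averages over consecutive sub-windows each of fluid length in $[T^*,2T^*]$ and each starting at a prelimit index above $T/\beta$; by convexity it is enough to bound $\mathbb{E}\|Y^\beta-x^*\|$ for a single such block average $Y^\beta$.

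Next I would identify the target value on the fluid scale. On any FSP, Lemmas~\ref{lemma:1}--\ref{lemma:2} give that $\hat{f}$ is absolutely continuous with $\hat{f}'(t)=v(t)=x'(t)+x(t)$ a.e., so for a window $[t',t'+\tau]$,
\[
\frac{\hat{f}(t'+\tau)-\hat{f}(t')}{\tau}=\frac{x(t'+\tau)-x(t')}{\tau}+\frac{1}{\tau}\int_{t'}^{t'+\tau}x(s)\,ds .
\]
Because all FSP coordinates are Lipschitz and $v(t)$ stays in the bounded region $\mathcal{X}(t)$, there is a finite $M$ bounding $\|x(s)\|$ for all $s$ and all FSPs from a bounded initial set, so the boundary term above is at most $2M/\tau\le 2M/T^*$; and whenever $\|x(s)-x^*\|\le\delta$ throughout the window, the integral average sits within $\delta$ of $x^*$. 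Choosing $\delta$ small and $T^*$ large so that $\delta+2M/T^*<\epsilon/2$ makes $\tfrac1\tau(\hat{f}(t'+\tau)-\hat{f}(t'))$ within $\epsilon/2$ of $x^*$ for every such window.

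To transfer this to the prelimit $Y^\beta$ I would run a subsequence--contradiction argument, the one delicate point being that the block's starting index $l'$ may diverge, so the u.o.c.\ convergence of Theorem~\ref{theorem:1} cannot be invoked directly at $t'=\beta l'$. The remedy is a restart: by Theorem~\ref{theorem:main} (itself obtained by iterating Theorems~\ref{theorem:1}--\ref{theorem:2}), $\Pr(X^\beta(l')\notin O_{\delta}(x^*))\to 0$ uniformly over $l'>T/\beta$ and $X^\beta(0)\in A$; conditioning on the complementary high-probability event and restarting the coupled recursion $(X^\beta,\hat{\alpha}^\beta,S^\beta)$ at step $l'$ reduces the estimate to bounding $\mathbb{E}\|\tfrac1\tau\hat{f}^\beta(\tau)-x^*\|$ for the process initialised in $O_{\delta}(x^*)$ with $\tau\in[T^*,2T^*]$ --- now a compact time interval. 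Along any subsequence $\beta\to0$, Theorem~\ref{theorem:1} and the Skorohod representation give a further subsequence with $z^\beta\to z$ almost surely u.o.c.\ on $[0,2T^*]$, where $z$ is an FSP with $x(0)\in\overline{O_{\delta}(x^*)}$; the forward-invariant sublevel set of the Lyapunov function $U(x^*)-U(x)$ built in the proof of Theorem~\ref{theorem:2} keeps such an FSP within a fixed multiple of $\delta$ of $x^*$ on all of $[0,2T^*]$, so by the bound of the previous paragraph $\tfrac1\tau\hat{f}(\tau)$ is within $\epsilon/2$ of $x^*$ once $\delta$ and $T^*$ are chosen small and large enough. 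Uniform integrability then yields $\limsup_{\beta\to0}\mathbb{E}\|\tfrac1\tau\hat{f}^\beta(\tau)-x^*\|\le\epsilon/2$; adding back the vanishing contribution of the rare event, the $O(\beta)$ discretisation term, and maximising over blocks gives $\|\mathbb{E}U^\beta(l_1,l_2)-x^*\|<\epsilon$ for all small $\beta$, uniformly over the stated range of $l_1,l_2,X^\beta(0)$.

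The main obstacle is exactly this non-compactness in time: uniform attraction (Theorem~\ref{theorem:2}) is a fluid-scale statement, whereas the averaging window $[l_1,l_2]$ lives at prelimit times of order $1/\beta$, so a naive fluid-limit passage is unavailable. Making the restart argument rigorous requires (i) a genuinely \emph{forward-invariant} neighbourhood of $x^*$ for FSPs, which must be extracted from the strict-ascent inequality $\tfrac{d}{dt}U(x(t))>0$ outside $O_\delta(x^*)$ established within Theorem~\ref{theorem:2}, and (ii) a clean strong-Markov-type restart for $(X^\beta,\hat{\alpha}^\beta,S^\beta)$ despite the exogenous, merely ergodic acceptance sequence (Assumption~\ref{assumption:1}), which one obtains by augmenting the state with the driving sequence and reusing the ergodic averaging already invoked in the proof of Theorem~\ref{theorem:1}. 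The remaining ingredients --- the bounded-increment and Lipschitz estimates, the convexity reduction of the window, and the upgrade from convergence in probability to convergence in expectation --- are routine.
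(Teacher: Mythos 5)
Your proposal follows the same route as the paper's (two-sentence) proof sketch --- invoke Theorems~\ref{theorem:1} and~\ref{theorem:2} to get attraction of the fluid-scaled process to $x^*$ and then pass to interval averages --- but supplies essentially all of the detail the paper omits: the identification of $U^\beta(l_1,l_2)$ with increments of $\hat f^\beta$, the boundary-term/integral-average decomposition, the restart argument for windows beginning at diverging prelimit times, forward invariance of a Lyapunov sublevel set, and the uniform-integrability upgrade from convergence in probability to convergence in expectation. The one ingredient of the paper's sketch you leave implicit is Assumption~\ref{assumption:3}'s time-scale separation $\eta/\beta \to 0$, which is what stabilizes $\mathcal{X}(t) \to \mathcal{X}$ (and hence fixes the target $x^*$) over the averaging window; it should be invoked explicitly where you argue that the restarted FSP remains in a fixed neighbourhood of $x^*$ on $[0,2T^*]$.
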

\begin{proof}
Theorems \ref{theorem:1} and \ref{theorem:2} ensure $x^\beta(t) \to x^*$. Assumption \ref{assumption:3}’s $\frac{\eta}{\beta} \to 0$ stabilizes $\mathcal{X}(t) \to \mathcal{X}$, so interval averages converge uniformly in expectation.
\end{proof}
\textbf{Corollary 1:} In the stationary regime, $\lim_{\beta \to 0} \mathbb{E} x^\beta(1) = x^*$.




\bibliographystyle{IEEEtran}
\bibliography{references} 
\end{document}